\documentclass[11pt,a4paper]{article}

\usepackage{cmap} 
\usepackage[T1]{fontenc}
\usepackage{lmodern} 

\usepackage[utf8]{inputenc}
\usepackage{amsmath,amssymb,amsthm,mathtools}
\usepackage[margin=2.5cm]{geometry}
\usepackage{enumitem}
\usepackage[colorlinks=true,linkcolor=black,citecolor=black,urlcolor=blue]{hyperref}
\usepackage{cleveref}
\usepackage{natbib}
\usepackage{booktabs}
\usepackage{float}
\usepackage{multirow}
\usepackage{algorithm}
\usepackage{algpseudocode}
\usepackage{tikz}
\usetikzlibrary{arrows.meta,positioning,shapes.geometric,fit,calc}
\usepackage{xcolor}
\usepackage{colortbl}
\usepackage{orcidlink}

\theoremstyle{plain}
\newtheorem{theorem}{Theorem}[section]
\newtheorem{proposition}[theorem]{Proposition}
\newtheorem{lemma}[theorem]{Lemma}

\theoremstyle{definition}
\newtheorem{definition}[theorem]{Definition}
\newtheorem{assumption}{Assumption}
\theoremstyle{remark}
\newtheorem{remark}[theorem]{Remark}

\newcommand{\dd}{\mathrm{d}}
\newcommand{\RR}{\mathbb{R}}
\newcommand{\EE}{\mathbb{E}}
\newcommand{\PP}{\mathbb{P}}

\usepackage{lineno}

\title{\textbf{Differentiable Stochastic Traffic Dynamics: Physics-Informed Generative Modelling in Transportation}}
\author{Wuping Xin \orcidlink{0000-0002-9021-5763}\\[4pt]\textit{Caliper Corporation}}
\date{March 2026\\[8pt]
{\small \copyright~2026 Wuping Xin. Licensed under CC BY-NC-ND 4.0.}}

\begin{document}
\maketitle

\begin{abstract}
Macroscopic traffic flow is stochastic, but the physics-informed deep learning methods currently used in transportation literature embed deterministic PDEs and produce point-valued outputs; the stochasticity of the governing dynamics plays no role in the learned representation. This work develops a framework in which the physics constraint itself is distributional and directly derived from stochastic traffic-flow dynamics. Starting from an Ito-type Lighthill-Whitham-Richards model with Brownian forcing, we derive a one-point forward equation for the marginal traffic density at each spatial location. The spatial coupling induced by the conservation law appears as an explicit conditional drift term, which makes the closure requirement transparent. Based on this formulation, we derive an equivalent deterministic Probability Flow ODE that is pointwise evaluable and differentiable once a closure is specified. Incorporating this as a physics constraint, we then propose a score network with an advection-closure module, trainable by denoising score matching together with a Fokker-Planck residual loss. The resulting model targets a data-conditioned density distribution, from which point estimates, credible intervals, and congestion-risk measures can be computed. The framework provides a basis for distributional traffic-state estimation and for stochastic fundamental--diagram analysis in a physics-informed generative setting.\footnotemark

\medskip\noindent
\textbf{Keywords:} stochastic LWR dynamics, Fokker--Planck equation,
probability flow ODE, conditional drift closure, score matching,
physics-informed deep learning, automatic differentiation,
distributional traffic state estimation, stochastic fundamental diagram
\end{abstract}
\footnotetext{The present version focuses on theoretical derivation and methodological formulation; empirical validation is deferred to a subsequent revision.}

\newpage

\section{Introduction}\label{sec:intro}

\subsection{The disconnect between stochastic traffic physics and deep learning}

Macroscopic traffic flow has been understood as a stochastic phenomenon
for decades.  Day-to-day variability in driver behaviour, vehicle
composition, weather, and minor incidents produces persistent
randomness in the density--speed--flow relationship
\citep{sumalee2011stochastic, jabari2012stochastic, li2012analysis}.
Stochastic extensions of the Lighthill--Whitham--Richards (LWR) model
\citep{lighthill1955kinematic, richards1956shock} have been developed
to capture this variability, including random-parameter formulations
\citep{fan2023dynamically, fan2024stochastic}, stochastic cell
transmission models \citep{sumalee2011stochastic}, and models based on
random time headways \citep{jabari2012stochastic}.

In parallel, deep learning has transformed traffic analysis.
Physics-informed neural networks (PINNs) embed deterministic traffic
models into neural network training losses and have shown strong
data efficiency in traffic state estimation
\citep{shi2021physics, mo2021physics, huang2022physics}.
Score-based generative models \citep{song2019generative, song2021score}
have demonstrated the ability to learn and sample from complex
probability distributions across domains from image synthesis to
molecular design.

These two lines of work have largely developed separately.
Stochastic traffic models are formulated as stochastic PDEs that
require sampling-based solvers (e.g., Monte Carlo simulation with
thousands of realisations), while neural network training requires
differentiable operations that can be backpropagated through.  No
existing stochastic traffic model produces a physics constraint that
is compatible with the automatic differentiation pipelines of modern
deep learning.  Recent work has applied generic diffusion and
score-based generative models to traffic state estimation, but these
approaches borrow the probability-flow machinery from image synthesis
and treat it as a black-box generative tool; the traffic physics
plays no structural role in the diffusion process itself.  As a
consequence, the learned distributions are not constrained by
conservation-law dynamics, and the link between the generative
model and the underlying stochastic traffic physics remains loose.

Here we begin with a stochastic traffic-flow model and derive the corresponding distributional dynamics directly. Rather than importing a generic diffusion model and adding traffic constraints afterward, we derive the learning framework from the stochastic traffic model itself, introducing closure and neural components only where they are needed for computation.

\subsection{The core obstacle and our resolution}

The difficulty in combining stochastic traffic models with
deep learning is not only computational but also mathematical.
Consider the standard approach: one writes down a stochastic PDE for
traffic density, and one would like to derive an equation governing how
the \emph{probability distribution} of density evolves over time.
If such an equation existed and were deterministic, it could be
evaluated pointwise and differentiated via automatic differentiation,
making it directly usable as a physics constraint in a neural network.

The obstacle is that for conservation-law--based traffic models like
LWR, this distributional equation is not self-contained.  The local
density evolution at any point $x$ depends on the spatial gradient
$\partial_x\rho$, which couples the probability distribution at $x$ to
the distributions at neighbouring locations.  This spatial coupling has
historically prevented the derivation of a closed, tractable evolution
equation for the one-point density distribution, and is the fundamental
reason why stochastic LWR dynamics have not been connected to
distributional deep learning.

We address this obstacle through three steps.  First, we formulate an
It\^{o}-type stochastic LWR model with dynamic Brownian forcing and
derive the exact one-point Fokker--Planck equation under a
smooth-solution regime, writing the spatially coupled drift as an
explicit conditional expectation rather than hiding it in notation.
Second, we make the required closure transparent and provide tractable
options, enabling the construction of an equivalent deterministic
Probability Flow ODE.  Third, we show that, once the closure is
specified, this Probability Flow ODE is pointwise evaluable and fully
compatible with automatic differentiation, so it can serve as a
direct, computable distributional physics constraint within a
score-based neural network.

Within the smooth-solution regime, the resulting one-point forward representation of stochastic conservation-law traffic dynamics becomes compatible with gradient-based deep learning once a closure is specified. This closes the gap between stochastic traffic modelling and distributional learning in a form that can be used directly in training.

\subsection{Significance beyond a single application}

The implications of deriving the generative structure from the traffic
physics, rather than borrowing it from a generic framework, extend
well beyond any single application.  We highlight three.

\paragraph{Distributional traffic state estimation.}
Existing methods, from extended Kalman filters
\citep{wang2008real, wang2009adaptive} to PIDL
\citep{shi2021physics}, produce point estimates of traffic density.
Uncertainty quantification, when attempted, is post-hoc and not
grounded in the governing physics.  Our framework targets a
distributional estimate
$p_\theta(\hat\rho; x, t)$, trained on the available observations
$\mathcal{O}$, from
which point estimates, credible intervals, and congestion-risk measures
follow directly, with the estimate itself
trained to approximately satisfy the one-point stochastic traffic
physics.

\paragraph{Stochastic fundamental diagram.}
The well-documented scatter in empirical flow--density measurements,
both at the link level and in the network-level macroscopic
fundamental diagram (MFD) \citep{geroliminis2008existence,
mazloumian2010spatial}, has lacked a principled stochastic
characterisation.  Because our framework produces the full density
distribution at each location, it naturally introduces a link-level stochastic fundamental diagram.  At the network level,
spatial aggregation of the distributional dynamics suggests a corresponding MFD, in which part of the observed scatter may be interpreted as a consequence of the underlying Brownian forcing rather
than unexplained noise.

\paragraph{Physics-informed generative modelling for traffic.}
More broadly, by showing that a stochastic conservation law can
be converted into a differentiable probability flow, this work
suggests an architectural pattern for physics-informed generative
models in transportation.  The same mathematical machinery,
comprising an exact forward equation, a conditional drift closure,
and a probability flow ODE, can in principle be extended to
second-order models (e.g., ARZ), multi-class traffic, and other
conservation-law systems in transportation science.  This
work shows how stochastic traffic-flow theory can be formulated
as a trainable distributional estimation framework.

\subsection{Contributions}

The contributions are ordered so that each one supports the next.
The generative structure in this framework is not borrowed from a generic diffusion model; it is
derived from the stochastic traffic physics.

\begin{enumerate}
\item \textbf{Stochastic LWR model with dynamic Brownian forcing}
  (\Cref{sec:model}).  We formulate an It\^{o}-type stochastic LWR
  equation driven by finite-dimensional Brownian motion with
  density-dependent, spatially structured noise, generalising
  the random-parameter SLWR of
  \citet{fan2023dynamically, fan2024stochastic} from static parametric
  variability to dynamic stochastic forcing.  This is the physical
  foundation from which the entire generative structure emerges.

\item \textbf{Exact one-point Fokker--Planck equation and Probability
  Flow ODE} (\Cref{sec:FPE,sec:PFODE}).  We derive the exact forward
  equation for the one-point marginal density with an explicit
  conditional drift, make the required closure transparent, and
  construct the equivalent deterministic Probability Flow ODE that is
  compatible with automatic differentiation.  The probability flow is
  not an externally imposed denoising process; it is the deterministic
  equivalent of the stochastic conservation-law dynamics.  This is the theoretical result that enables the proposed deep learning
  integration.

\item \textbf{Physics-informed score-matching architecture}
  (\Cref{sec:method}).  We formulate a learning architecture
  pairing a score network with an auxiliary advection-closure module,
  trainable via denoising score matching and a Fokker--Planck residual
  loss.  Because the physics residual is derived from the stochastic
  traffic model rather than from a generic diffusion prior, the
  learned distributional estimate is constrained by
  conservation-law dynamics. Previous transportation deep-learning formulations have rarely used a stochastic conservation-law distributional equation as the governing physics constraint (they used a deterministic PDE residual).
\end{enumerate}

\subsection{Paper organisation}

The remainder of the paper is organised as follows.
\Cref{sec:lit} reviews the literature on stochastic traffic modelling,
traffic state estimation, physics-informed neural networks, and
score-based generative modelling, and positions the present work
relative to the closest prior studies.
\Cref{sec:model} formulates the It\^{o}-type SLWR model and states
the standing assumptions.
\Cref{sec:FPE} derives the exact one-point Fokker--Planck equation
with complete proofs.
\Cref{sec:PFODE} derives the equivalent probability flow ODE,
establishes well-posedness, and provides physical interpretation of its
three constituent terms.
\Cref{sec:method} presents the physics-informed score matching
architecture, including the score network, the advection-closure
module, the loss function, training algorithm, and inference procedure.
\Cref{sec:discussion} discusses practical implications, limitations,
extensions to the stochastic fundamental diagram at both link and
network levels, and outlines the numerical validation programme
currently underway.
\Cref{sec:conclusion} concludes.

\section{Literature Review}\label{sec:lit}

We organise the review along four threads: stochastic traffic flow
modelling, traffic state estimation, physics-informed machine learning,
and score-based generative modelling, all of which converge in the present
work.  Each subsection concludes with an explicit identification of the
gap addressed by this paper.

\subsection{Stochastic macroscopic traffic flow modelling}
\label{sec:lit-stochastic}

The deterministic LWR model \citep{lighthill1955kinematic,
richards1956shock} is the foundational first-order macroscopic traffic
flow model.  Formulated as a scalar hyperbolic conservation law, it
captures kinematic shock and rarefaction waves but treats all traffic
variables as ensemble averages \citep{prigogine1971kinetic}.
Second-order extensions, including the Payne--Whitham model
\citep{payne1971model, whitham1974linear} and the Aw--Rascle--Zhang
(ARZ) model \citep{aw2000resurrection, zhang2002non}, introduce a
momentum equation that can reproduce non-equilibrium phenomena such as
stop-and-go waves.  The generic second-order modelling (GSOM)
framework of \citet{lebacque2007generic} unifies many such models by
coupling the LWR conservation law with dynamics of driver-specific
attributes.  Despite their descriptive power, these models are
deterministic and cannot capture stochastic variability inherent in
real traffic.

\paragraph{Kinetic theory and the Prigogine--Herman programme.}
A distinct tradition of modelling traffic stochasticity originates
from the kinetic theory of \citet{prigogine1971kinetic}, later
refined by \citet{paveri1975boltzmann} and surveyed by
\citet{klar1996enskog}.  In this framework, the fundamental object is
a \emph{velocity distribution function} $f(x,v,t)$, defined as the probability
density of finding a vehicle at position $x$ with speed $v$ at time
$t$.  Its evolution is governed by a Boltzmann-type
integro-differential equation that accounts for vehicle interactions
(acceleration, deceleration, and passing).  The macroscopic LWR model
can be recovered as the zeroth-moment closure of the kinetic equation,
by integrating out the velocity variable.

The Prigogine--Herman framework shares a conceptual affinity with the
present work: both govern a probability density that evolves according
to a physically derived equation.  However, the two frameworks operate
in fundamentally different spaces and serve different purposes.  In
kinetic theory, the probability is distributed over the \emph{speed}
variable $v$ and describes the heterogeneity of vehicle speeds within
a traffic stream at a given location; the stochasticity arises from
vehicle-to-vehicle interactions at the mesoscopic scale.  In our
framework, the probability is distributed over the \emph{macroscopic
density} variable $\hat\rho$ and quantifies the uncertainty in the
aggregate traffic state itself; the stochasticity is generated by
exogenous Brownian forcing on the conservation law at the macroscopic
scale.  The governing equation in kinetic theory is a Boltzmann-type
equation in $(x,v,t)$-space, whereas ours is a Fokker--Planck equation
in $(\hat\rho,x,t)$-space.  Perhaps most importantly, the purpose of
kinetic theory is to \emph{derive} macroscopic traffic models from
microscopic principles, whereas our purpose is to \emph{estimate}
macroscopic traffic states from sparse sensor data, using the
probability distribution as the vehicle for uncertainty quantification.

Incorporating stochasticity into macroscopic traffic models has been
pursued along several directions.

\paragraph{Randomised fundamental diagrams and noise-augmented models.}
\citet{li2012analysis} modified the speed--density function by
including a random noise term and developed an extended LWR model
based on these random fundamental relationships.
\citet{ngoduy2011multiclass} introduced stochastic fundamental
diagrams in a multiclass first-order setting.
\citet{sumalee2011stochastic} proposed a stochastic cell transmission
model (SCTM) using a zero-mean Gaussian random process to form a
probabilistic density.  \citet{boel2006compositional} developed a
compositional stochastic model for real-time freeway simulation.
A common criticism of these approaches is that adding noise terms
directly to a deterministic equation can produce negative densities
and lead to mean dynamics inconsistent with the deterministic
counterpart \citep{jabari2012stochastic}.

\paragraph{Stochastic models with explicit physical attribution.}
To address the consistency issue, \citet{jabari2012stochastic}
developed a cell-transmission-based stochastic model in which the
source of randomness is the uncertainty inherent in driver gap choice,
represented by random state-dependent vehicle time headways.
\citet{jabari2014probabilistic} subsequently derived probabilistic
macroscopic traffic flow relations from Newell's simplified
car-following model, treating time headways and spacings as random
variables.  \citet{martinez2020stochastic} introduced heterogeneous
(vehicle-dependent) jam densities to formulate a stochastic LWR model
in Lagrangian coordinates.

\paragraph{Random-parameter SLWR models.}
\citet{fan2023dynamically} proposed a stochastic LWR (SLWR) model that
randomises the free-flow speed $u_f(\omega)$ to represent driver
heterogeneity, while holding each individual driver's behaviour
constant along the road.  The model is formulated as a conservation
law parameterised by $u_f(\omega)$; once $\omega$ is sampled, the
equation reduces to a fully deterministic PDE.  The dynamically
bi-orthogonal (DyBO) method, based on Karhunen--Lo\`{e}ve expansion
and Hermite polynomial representation of the stochastic basis, was
applied to solve the model with approximately 5000$\times$ speedup
over the Monte Carlo method.  \citet{fan2024stochastic} extended this
framework to nonlinear speed--density relationships (e.g., Drake's
exponential model) by embedding a Taylor series expansion technique
within the DyBO formulation.

The Fan--Du framework provides a physically well-motivated and
computationally efficient approach to stochastic traffic modelling.
However, the scope of stochasticity it captures is inherently limited
by the static-parameter formulation.  First, because $u_f(\omega)$ is
time-invariant, dynamic perturbations (weather changes, incidents,
sensor noise) occurring \emph{during} the traffic process lie outside
the model's scope.  Second, the absence of a Brownian forcing term
means that no It\^{o} correction arises, there is no diffusion in
density space, and no Fokker--Planck equation governs the probability
distribution of density.  Ensemble statistics (mean and standard
deviation) are computed by integrating over the static parameter
distribution rather than by solving a distributional evolution equation.
The present work extends the Fan--Du framework by introducing dynamic
Brownian forcing, thereby enabling the Fokker--Planck and probability
flow ODE machinery developed in \Cref{sec:FPE,sec:PFODE}.  This
extension is consequential because the Fokker--Planck equation provides
an exact one-point density-space evolution law at each location,
with the unresolved spatial coupling isolated in a conditional drift
coefficient.  The associated probability flow ODE then furnishes a
deterministic transport representation that can be embedded as a
physics constraint in a transportation state-estimation model.

\paragraph{It\^{o}-type stochastic conservation laws.}
In the broader PDE literature, stochastic conservation laws driven by
Brownian motion have been studied extensively for well-posedness.
\citet{kim2003stochastic} established existence for a stochastic
scalar conservation law.
\citet{debussche2010scalar} proved existence and uniqueness of kinetic
solutions.  \citet{hofmanova2013degenerate} treated degenerate
parabolic SPDEs using kinetic formulations.
\citet{chen2012well} studied well-posedness of stochastic conservation
laws with spatially inhomogeneous flux.  These works provide the
mathematical foundation for It\^{o}-type stochastic LWR models but
do not address estimation from data or machine learning integration.

\paragraph{Gap addressed.}
We are not aware of prior work that formulates an It\^{o}-type
stochastic LWR model with density-dependent, spatially structured
Brownian noise and links it to a data-driven estimation framework.
Our model extends the Fan--Du random-parameter SLWR (the
latter being obtained when the noise coefficients $\sigma_k$ are set
to zero and a separate random parameterisation of the flux is
introduced) while enabling the derivation of a Fokker--Planck equation
and probability flow ODE.

\subsection{Traffic state estimation}\label{sec:lit-TSE}

Traffic state estimation (TSE) reconstructs spatio-temporal traffic
variables from sparse sensor data.  \citet{seo2017traffic} provided
a comprehensive survey categorising TSE methods as model-driven,
data-driven, and hybrid.

\paragraph{Model-driven methods.}
Classical approaches apply Kalman filtering and its variants on top
of macroscopic traffic models.  \citet{wang2005real} developed an
extended Kalman filter (EKF) for the METANET model.
\citet{wang2008real} demonstrated real-data testing of an adaptive EKF
with the second-order model, and \citet{wang2009adaptive} proposed an
adaptive freeway traffic state estimator.  \citet{di2010hybrid}
combined EKF with GPS data for arterial density estimation.
\citet{seo2017arz} developed an ARZ-based data fusion method.
These methods are principled but depend heavily on the accuracy of the
underlying deterministic model and on Gaussian assumptions that may
not hold in congested regimes.  Particle filters
\citep{mihaylova2007particle} relax the Gaussian assumption but incur
substantially higher computational cost.

\paragraph{Data-driven methods.}
Neural networks \citep{zheng2019deep}, convolutional and recurrent
architectures \citep{ma2017learning, wu2018hybrid}, and Gaussian
processes \citep{rodrigues2019multi} have been applied to TSE and
short-term traffic prediction.  \citet{adeli2004mesoscopic} used
wavelet-based neural networks for freeway work zone modelling, and
\citet{ghosh2006neural} developed neural network-wavelet
microsimulation models.  Deep learning methods can capture complex
spatio-temporal patterns but require large training datasets, offer no
guarantees of physical consistency, and may produce density fields
that violate conservation laws.

\paragraph{Hybrid methods: physics-informed deep learning (PIDL).}
\citet{shi2021physics} introduced a PIDL framework that encodes the
Greenshields-based ARZ model into a physics-informed neural network
(PINN) to regularise a data-driven estimation network.  The
architecture consists of a physics-uninformed neural network (PUNN)
mapping $(x,t)$ to $(\hat\rho,\hat u)$, and a PINN computing PDE
residuals $(\hat f_1, \hat f_2)$ as regularisation.  The loss function
combines data discrepancy (MSE on observations) and physical
discrepancy (MSE on PDE residuals at auxiliary collocation points).
Tested on the NGSIM US-101 dataset with both loop detector and probe
vehicle configurations, PIDL outperformed pure neural networks and
EKF in both estimation accuracy and data efficiency.  The framework
also demonstrated the ability to discover unknown model parameters
($\rho_{\max}$, $u_{\max}$, $\tau$) simultaneously with state
estimation.  Subsequent work applied PIDL with data-fitted flux
functions \citep{fan2013data} and extended the approach to
network-level estimation.

\paragraph{Gap addressed.}
All existing TSE methods, including PIDL, produce point
estimates of traffic states.  Uncertainty quantification, when present,
is applied post-hoc through techniques such as Monte Carlo dropout or
ensemble methods, which are not grounded in the underlying traffic
physics.  Our framework produces a data-conditioned distributional
estimate trained to approximately satisfy the one-point Fokker--Planck
equation as a physics regulariser, providing physically grounded
uncertainty quantification as a structural feature rather than a
post-hoc addition.

\subsection{Physics-informed neural networks}\label{sec:lit-PINN}

Physics-informed neural networks (PINNs) embed differential equations
into the loss function of neural networks, enabling the solution of
both forward and inverse problems for PDEs with limited data.  The
foundational work of \citet{raissi2019physics} demonstrated the
approach across fluid mechanics, quantum mechanics, and reaction-diffusion systems.  PINNs have since been applied in heat transfer
\citep{cai2021physics}, materials science, and geophysics.

In transportation, beyond the PIDL work of \citet{shi2021physics},
PINNs have been used for car-following model calibration
\citep{mo2021physics}, traffic flow forecasting with conservation law
constraints \citep{huang2022physics}, fundamental diagram estimation,
and network-level traffic modelling.
\citet{nicodemus2022physics} applied PINNs to the Aw--Rascle--Zhang
model for forward simulation.
\citet{lu2023physics} developed a differentiable programming framework
for joint traffic state and queue profile estimation on layered
computational graphs, demonstrating that representing deterministic
traffic flow PDEs on a computational graph enables efficient
gradient computation via automatic differentiation for traffic
estimation.  All of these approaches, however, embed
\emph{deterministic} PDEs and produce deterministic, point-valued outputs.  Several
extensions have been proposed for uncertainty quantification.
\citet{yang2021b} developed Bayesian PINNs (B-PINNs) that place
priors on network weights and perform approximate Bayesian inference,
capturing epistemic uncertainty.  Ensemble PINNs aggregate predictions
from multiple independently trained networks.
\citet{zhang2019quantifying} used latent-variable models within the
PINN framework.  However, these approaches treat uncertainty as a
property of the \emph{network} (epistemic), not as an intrinsic
feature of the \emph{physical system} (aleatoric).  The distinction is
fundamental: epistemic uncertainty can in principle be reduced with
more data, whereas aleatoric uncertainty arises from the stochastic
nature of the governing dynamics and persists regardless of data
volume.

\paragraph{Gap addressed.}
Our framework embeds a \emph{stochastic} PDE (through its equivalent
Fokker--Planck equation) into the PINN loss.  The uncertainty in the
output distribution is therefore aleatoric: it arises from the
stochastic dynamics of traffic flow itself, and is governed by the
physics, not by network parameter uncertainty.  This represents a
qualitatively different approach to UQ in physics-informed learning.

\subsection{Score-based generative modelling and probability flow ODEs}
\label{sec:lit-score}

Score-based generative models learn the score function
$\nabla_{\mathbf{x}}\log p_t(\mathbf{x})$ of a time-dependent data
distribution and use it to generate samples via stochastic differential
equations (SDEs) or the equivalent deterministic probability flow ODE.
\citet{song2019generative} introduced score matching with Langevin
dynamics for generative modelling, and \citet{ho2020denoising}
developed denoising diffusion probabilistic models (DDPMs).  The
unifying framework of \citet{song2021score} established that for any
It\^{o} SDE with drift $\mathbf{f}(\mathbf{x},t)$ and diffusion
$g(t)$, there exists a deterministic ODE, the probability flow
ODE, whose marginal distributions match those of the SDE at every
time.  In the notation of \citet{song2021score}, this ODE reads
\begin{equation}\label{eq:song-PF}
  \dd\mathbf{x} = \bigl[\mathbf{f}(\mathbf{x},t)
  - \tfrac{1}{2}g(t)^2\nabla_{\mathbf{x}}\log p_t(\mathbf{x})\bigr]\dd t.
\end{equation}
This SDE/ODE equivalence has become the theoretical backbone of modern
diffusion models and has found applications in image generation
\citep{dhariwal2021diffusion}, molecular design, audio synthesis, and
scientific computing.

In the mathematical community, probability flow ODEs have been
connected to optimal transport theory.  The Benamou--Brenier formula
\citep{benamou2000computational} characterises the optimal transport
map as the velocity field minimising kinetic energy subject to the
continuity equation constraint.  Flow matching
\citep{lipman2023flow, albergo2023building} directly parameterises
conditional vector fields along interpolation paths between noise and
data.  Rectified flow \citep{liu2023flow} iteratively straightens
transport paths to enable few-step generation.  Schr\"{o}dinger
bridges \citep{de2021diffusion, chen2022likelihood} generalise optimal
transport to the stochastic setting by finding the most likely
diffusion process connecting two marginal distributions.

\paragraph{Neural ODEs and continuous normalising flows.}
A closely related development is the neural ODE framework of
\citet{chen2018neural}, which parameterises the dynamics of a hidden
state by a neural network and solves the resulting ODE with adaptive
solvers, enabling backpropagation through the dynamics via the adjoint
method.  When the neural ODE is interpreted as a transport map acting
on probability densities, one obtains a \emph{continuous normalising
flow} (CNF): a learned velocity field whose flow pushes a simple base
distribution to a complex target, with the log-density evolving via
the instantaneous change-of-variables formula
$\partial_t \log p = -\nabla\cdot v_\theta$
\citep{chen2018neural,grathwohl2019ffjord}.  CNFs are structurally
analogous to the probability flow ODE~\eqref{eq:song-PF}: both define
a deterministic ODE whose flow transports probability mass.  The
difference is that in CNFs the velocity field is a free neural network,
whereas in score-based models it is constrained by the score of the
evolving density.  Our framework shares the CNF perspective---the
closed PF-ODE~\eqref{eq:PFODE-closed} is a continuity equation with a
velocity field comprising a physics-derived drift closure and a learned
score---but differs in two respects: (i)~the velocity field is not
freely parameterised but is structurally constrained by the stochastic
conservation-law physics, and (ii)~the evolution variable is not a
latent code but the physical traffic density $\hat\rho$, with spatial
location $x$ entering as an external parameter.

Despite this mathematical maturity, probability flow ODEs have not
been applied to traffic state estimation.  The principal obstacle is
that the standard theory assumes a purely parabolic diffusion process,
whereas traffic dynamics are governed by \emph{hyperbolic} conservation
laws.  The drift in the traffic setting contains $\partial_x f(\rho)$,
a spatial derivative of the flux function, which introduces
characteristic wave propagation absent in the standard setting.

\paragraph{Gap addressed.}
We derive a probability flow ODE for a stochastic
conservation law, in which parabolic evolution in density space is
parameterised by physical location with the hyperbolic traffic physics
retained implicitly through a conditional drift closure.  Prior transportation studies have not combined an exact one-point forward equation, explicit conditional-drift closure, and probability-flow representation into a score-based physics residual for stochastic traffic estimation.

\subsection{The macroscopic fundamental diagram and its stochastic
variability}\label{sec:lit-MFD}

The idea that a well-defined relationship exists between aggregate
flow and aggregate density at the macroscopic level has been present
in traffic flow theory for decades, with roots in the kinetic theory
of \citet{prigogine1971kinetic} and early treatments of two-fluid
models.  The concept was reinvigorated at the \emph{network} level by
\citet{geroliminis2008existence}, who provided large-scale empirical
evidence for a reproducible network-level MFD using detector data from
Yokohama, Japan.  Since then, the network MFD has been adopted
as a tool for perimeter control, congestion pricing, and regional
traffic monitoring.

A persistent feature of the empirical MFD is \emph{scatter}: at the
same average density, different realisations of traffic produce
different aggregate flows.  \citet{mazloumian2010spatial} showed that
this scatter is largely explained by the spatial variability of vehicle
densities across the network.  \citet{leclercq2014macroscopic} compared
estimation methods and confirmed the sensitivity of the MFD shape to
the spatial distribution of congestion.  \citet{ambuehl2023understanding}
connected congestion propagation patterns to MFD hysteresis using
percolation theory.

Despite the empirical recognition that MFD scatter is not random noise
but a systematic consequence of spatial heterogeneity, no principled
stochastic framework currently exists for the MFD.  Existing treatments
either fit statistical models (e.g., variance as a function of
$\bar\rho$) without physical grounding, or simulate scatter via
microsimulation without connecting it to macroscopic stochastic dynamics.
The framework developed in this paper provides the mathematical
machinery to fill this gap: the spatially structured Brownian forcing
in the It\^{o}-type SLWR generates precisely the kind of spatial
density heterogeneity that drives MFD scatter, and the Fokker--Planck
equation can in principle be aggregated to yield a distributional
evolution law at the network level.  We elaborate on this extension
direction in \Cref{sec:discussion}.

\subsection{Summary of research gaps and contributions}
\label{sec:lit-summary}

The foregoing review reveals three interrelated gaps in the literature
that this paper addresses.

\paragraph{Gap 1: The absence of a distributional evolution equation
for macroscopic traffic density.}
Stochastic traffic flow models fall into two categories.  On the one
hand, random-parameter models such as the Fan--Du SLWR
\citep{fan2023dynamically, fan2024stochastic} treat uncertainty as
static parametric variability: once the random parameter is drawn, the
system is deterministic, and no equation governs how the probability
distribution of density evolves in time.  On the other hand, the
Prigogine--Herman kinetic theory \citep{prigogine1971kinetic,
paveri1975boltzmann} does provide a distributional evolution equation
(the Boltzmann equation for the speed distribution), but it operates
at the mesoscopic scale and in speed space, not at the macroscopic
scale in density space.  Meanwhile, It\^{o}-type stochastic
conservation laws have been studied in the PDE theory community
\citep{kim2003stochastic, debussche2010scalar, hofmanova2013degenerate},
but purely for well-posedness; they have never been connected to
traffic estimation or machine learning.  This paper addresses that
gap by formulating an It\^{o}-type stochastic LWR model and deriving
the Fokker--Planck equation that governs the one-point marginal
density of traffic density (\Cref{sec:FPE}).  This equation
provides an exact one-point forward law for the macroscopic traffic-density
distribution under dynamic stochastic forcing, together with a clear
statement of the closure required for computation.

\paragraph{Gap 2: The lack of a deterministic equivalent that can
serve as a physics constraint in neural networks.}
Even if a stochastic traffic model were available, embedding it
directly into a physics-informed neural network would be impractical:
stochastic PDEs require sampling-based solvers (e.g., Monte Carlo),
which are incompatible with the backpropagation-based training of
neural networks.  The probability flow ODE, introduced by
\citet{song2021score} in the context of generative modelling, provides
a deterministic ODE whose marginal distributions match those of the
original SDE, making it amenable to PINN-style residual evaluation
via automatic differentiation.  However, the existing theory assumes
purely parabolic diffusion processes and does not apply to hyperbolic
conservation laws.  Here the probability-flow construction is adapted to the exact one-point forward equation derived from a stochastic conservation law, yielding a probability flow ODE
(\Cref{sec:PFODE}) in which the density-space evolution is parabolic
while the hyperbolic traffic physics is retained implicitly through a
conditional drift closure.  Once the closure is specified, the
resulting ODE can be evaluated pointwise and differentiated through a
neural network, enabling distributional physics constraints that were
previously inaccessible.

\paragraph{Gap 3: Traffic state estimation methods produce point
estimates without physically grounded uncertainty quantification.}
All existing TSE methods (model-driven \citep{wang2008real},
data-driven \citep{zheng2019deep}, and hybrid
\citep{shi2021physics}) output a single best estimate of traffic
density and velocity.  Uncertainty quantification, when attempted, is
applied post-hoc through techniques such as Monte Carlo dropout,
ensemble methods, or Bayesian neural network approximations
\citep{yang2021b}.  These approaches capture epistemic uncertainty
(arising from limited data or model capacity) but not aleatoric
uncertainty (arising from the intrinsic stochasticity of traffic
dynamics).  Moreover, the resulting uncertainty estimates are not
required to satisfy any traffic physics.  We address
this gap by designing a physics-informed score matching framework
(\Cref{sec:method}) in which the probability flow ODE serves as a
distributional physics constraint during training.  The output is a
distributional estimate
$p_\theta(\hat\rho; x, t)$, trained to approximately satisfy
the Fokker--Planck equation.  Point estimates, credible
intervals, and risk metrics (e.g., the probability that density
exceeds a congestion threshold) are derived as functionals of this
estimate.

\paragraph{Relation among the three contributions.}
The three contributions are linked.  The It\^{o}-type SLWR
model (Contribution~1) yields a Fokker--Planck equation that is absent
from the static-parameter formulation of Fan et al.  The probability flow ODE (Contribution~2) converts that
equation into a deterministic form compatible with neural network
training, while keeping the drift closure explicit.  The
physics-informed score-matching method (Contribution~3) then uses
the resulting probability flow ODE as a distributional regulariser for data-conditioned distributional estimation from sparse
sensor data.

\Cref{tab:positioning} provides a dimension-by-dimension comparison with the two closest prior works.

\begin{table}[ht]
\centering
\caption{Positioning of the proposed framework relative to the closest
prior works.}
\label{tab:positioning}
\renewcommand{\arraystretch}{1.25}
\small
\begin{tabular}{p{2.8cm}|p{3.0cm}|p{3.0cm}|p{3.5cm}}
\toprule
& \textbf{Fan--Du SLWR}
  \newline{\scriptsize CACIE 2023; Trans.\ B 2024}
& \textbf{Shi--Di PIDL}
  \newline{\scriptsize AAAI 2021}
& \textbf{This paper} \\
\midrule
Stochastic model & Random-parameter PDE & None (det.\ ARZ) & It\^{o} SPDE \\
Distributional eq. & None & N/A & Fokker--Planck \\
Physics in learning & Forward solver & Det.\ PDE residual & PF-ODE residual \\
Output & Mean \& std & Point estimate & Full $p_\theta(\hat\rho;x,t)$ \\
UQ mechanism & Parametric & Post-hoc only & Intrinsic (score) \\
Dynamic noise & No & No & Yes ($\dd W_t$) \\
Sensor data fusion & Not designed & Core capability & Core capability \\
\bottomrule
\end{tabular}
\end{table}

\section{Model Formulation}\label{sec:model}

\subsection{Probability space and spatial domain}

Let $(\Omega,\mathcal{F},\{\mathcal{F}_t\}_{t\ge 0},\PP)$ be a
complete filtered probability space satisfying the usual conditions
(right-continuity and completeness).  Fix a bounded highway segment
$\mathcal{D}=[0,L]\subset\RR$ and a time horizon $T>0$.  Write
$Q_T:=\mathcal{D}\times(0,T]$.  Let $\{W^k_t\}_{k=1}^K$ be $K$
mutually independent standard $(\mathcal{F}_t)$-Brownian motions.

\subsection{The It\^{o}-type stochastic LWR model}

\begin{definition}[It\^{o}-type SLWR with finite-dimensional noise]\label{def:SLWR}
The \emph{stochastic Lighthill--Whitham--Richards} (SLWR) equation reads
\begin{equation}\label{eq:SLWR}
  \dd\rho(x,t) = -\partial_x f\bigl(\rho(x,t)\bigr)\,\dd t
  + \sum_{k=1}^K \sigma_k\bigl(\rho(x,t)\bigr)\,e_k(x)\,\dd W^k_t,
  \qquad (x,t)\in Q_T,
\end{equation}
with initial data $\rho(x,0)=\rho_0(x)$ and boundary conditions on
$\partial\mathcal{D}$ appropriate for the traffic scenario.
\end{definition}

\begin{remark}[On the form of the stochastic forcing]
\label{rem:non-conservative-noise}
The noise in~\eqref{eq:SLWR} enters as a source-type forcing rather
than in divergence form.  Consequently, pathwise conservation of the
total vehicle count $\int_\mathcal{D}\rho\,\dd x$ is not preserved:
integrating~\eqref{eq:SLWR} over $\mathcal{D}$ yields a stochastic
fluctuation $\sum_k(\int\sigma_k(\rho)\,e_k(x)\,\dd x)\,\dd W^k_t$
in addition to the boundary flux.  Physically, this formulation
represents exogenous random perturbations---demand fluctuations,
incidents, weather-induced capacity changes, and on/off-ramp
variability---that act as local sources and sinks of density on a
bounded segment.  Such perturbations are ubiquitous in real freeway
corridors.  At the level of expected mass, integrating over
$\mathcal{D}$ and taking expectations yields
$\partial_t\int_\mathcal{D}\EE[\rho]\,\dd x
= -\EE[f(\rho)]|_{\partial\mathcal{D}}$
under appropriate integrability conditions, so the mean mass balance
retains the conservation-law structure up to boundary flux.
The endpoint vanishing condition
$\sigma_k(0)=\sigma_k(\rho_{\max})=0$
(Assumption~\ref{ass:noise}(i)) ensures that the noise intensity
vanishes at the physical density bounds; state-space invariance of
the interval $[0,\rho_{\max}]$ is imposed directly via
Assumption~\ref{ass:smooth}(iii) rather than derived from the endpoint
condition alone.  An alternative formulation placing the noise inside
the spatial divergence, $\dd\rho = -\partial_x[f(\rho)\,\dd t +
\sum_k\sigma_k(\rho)\,e_k(x)\,\dd W^k_t]$, would preserve pathwise
conservation but introduces a fundamentally different SPDE whose
analysis requires separate treatment.  We adopt the source-type
formulation here because it yields a tractable one-point forward
equation with standard It\^{o} calculus, while the resulting
non-conservation is physically interpretable and bounded by the noise
amplitude.
\end{remark}

The model components are governed by five standing assumptions.

\begin{assumption}[Flux function]\label{ass:flux}
$f\in C^2([0,\rho_{\max}])$ with $f(\rho)=\rho\,v(\rho)$, where
$v:[0,\rho_{\max}]\to[0,u_f]$ is a decreasing speed--density relation
satisfying $v(0)=u_f>0$ and $v(\rho_{\max})=0$.  The flux is concave:
$f''(\rho)\le 0$ on $[0,\rho_{\max}]$ with $f''(\rho)<0$ for a.e.\
$\rho$.
\end{assumption}

\begin{remark}[Generality of the flux]\label{rem:flux-general}
Assumption~\ref{ass:flux} is satisfied by smooth concave fundamental
diagrams, including Greenshields
($f(\rho)=u_f\rho(1-\rho/\rho_{\max})$) and Drake
($f(\rho)=u_f\rho\exp(-\rho^2/(2k_o^2))$, cf.\ Fan et al., 2024).
Piecewise-linear models such as the triangular fundamental diagram do
not satisfy the $C^2$ requirement; extending the theory to
piecewise-smooth flux functions would require a separate treatment.
\end{remark}

\begin{assumption}[Noise structure]\label{ass:noise}
For each $k=1,\dots,K$:
\begin{enumerate}[label=\textup{(\roman*)}]
  \item $\sigma_k\in C^3([0,\rho_{\max}])$ with
    $\sigma_k(0)=\sigma_k(\rho_{\max})=0$;
  \item $e_k\in C^3(\overline{\mathcal{D}})$ are deterministic spatial
    basis functions;
  \item (Non-degeneracy)
    $\Sigma^2(\rho,x):=\sum_{k=1}^K\sigma_k(\rho)^2 e_k(x)^2>0$
    for all $(\rho,x)\in(0,\rho_{\max})\times\mathcal{D}$.
\end{enumerate}
\end{assumption}

\begin{remark}[Regularity of the noise coefficients]
\label{rem:noise-regularity}
The $C^3$ regularity in Assumption~\ref{ass:noise}(i)--(ii) is
stronger than what is needed for the Fokker--Planck derivation alone
(which requires only $C^2$).  The additional derivative is needed in
the score-form residual~\eqref{eq:score-FPE-residual}: the term
$\partial_{\hat\rho}^2 v_{\theta,\phi}$ involves
$\partial_{\hat\rho}^3\Sigma^2$ through the expansion of the
probability-flow velocity.  In practice, typical noise
parameterisations such as
$\sigma_k(\rho)=\alpha_k\rho(\rho_{\max}-\rho)$ are $C^\infty$, so
this requirement is not restrictive.
\end{remark}

\begin{assumption}[Initial data]\label{ass:init}
$\rho_0\in C^1(\overline{\mathcal{D}})$ with
$0<\rho_0(x)<\rho_{\max}$ for all $x\in\mathcal{D}$.
\end{assumption}

\begin{assumption}[Regularity regime]\label{ass:smooth}
The solution $\rho(\cdot,\cdot)$ of~\eqref{eq:SLWR} satisfies:
(i)~for each $t$, the map $x\mapsto\rho(x,t)$ is in $C^1(\overline{\mathcal{D}})$
$\PP$-a.s.; (ii)~for each $x$, the process $t\mapsto\rho(x,t)$ is a
continuous semimartingale; (iii)~$0<\rho(x,t)<\rho_{\max}$ for all
$(x,t)\in Q_T$, $\PP$-a.s.\ (no shock formation during $[0,T]$).
\end{assumption}

\begin{remark}[Practical scope of Assumption~\ref{ass:smooth}]
\label{rem:smooth-scope}
This assumption is satisfied on the entire estimation horizon for
sub-critical (free-flow) traffic states with smooth initial data, and
for arbitrary smooth initial data over sufficiently short time intervals
before the first shock forms.  For traffic scenarios that include shocks,
the viscous regularisation discussed in \Cref{rem:shock} below
provides a systematic extension in which Assumption~\ref{ass:smooth}
holds globally (Section 5).
\end{remark}

\begin{assumption}[Existence of a smooth marginal density]\label{ass:density}
For each $(x,t)\in Q_T$, the random variable $\rho(x,t)$ possesses a
probability density $p(\hat\rho;x,t)$ with respect to Lebesgue measure
on $(0,\rho_{\max})$, and
$p\in C^{2,1,1}((0,\rho_{\max})\times Q_T)$.
\end{assumption}

\begin{remark}[On Assumption~\ref{ass:density}]
The existence of a smooth marginal density for finite-dimensional SDEs
with non-degenerate diffusion is motivated by Malliavin-calculus
techniques.  In the present setting, where the pointwise process
$Y_t^x$ inherits its noise from a spatially projected
finite-dimensional Brownian motion, the non-degeneracy condition in
Assumption~\ref{ass:noise}(iii) suggests that a smooth one-point
density should exist, but verifying this for the full SPDE
point-evaluation requires care that goes beyond standard
finite-dimensional results.  We therefore impose
Assumption~\ref{ass:density} directly rather than deriving it.
\end{remark}

\subsection{Aggregate diffusion coefficient}

Define the aggregate diffusion coefficient
\begin{equation}\label{eq:D-def}
  D(\rho,x) := \frac{1}{2}\sum_{k=1}^K
  \bigl[\sigma_k(\rho)\,e_k(x)\bigr]^2
  = \frac{1}{2}\,\Sigma^2(\rho,x).
\end{equation}

\subsection{Distinction from the Fan--Du random-parameter SLWR}

The SLWR of \citet{fan2023dynamically,fan2024stochastic} reads
$\partial_t k + \partial_x[k\,v(k;u_f(\omega))]=0$ where
$u_f(\omega)$ is a \emph{time-invariant, spatially homogeneous}
random parameter.  Once $\omega$ is sampled, the equation is
deterministic.  Key structural differences:
\begin{enumerate}[label=(\alph*)]
  \item Equation~\eqref{eq:SLWR} contains Brownian motion
    $\dd W^k_t$; Fan--Du does not.
  \item No Fokker--Planck equation arises in the Fan--Du framework
    (no diffusion mechanism in density space).
  \item Setting $\sigma_k\equiv 0$ in~\eqref{eq:SLWR} reduces it to a
    deterministic conservation law; with a separate random-parameter
    specification of the flux $f(\rho;u_f(\omega))$, this connects to
    the Fan--Du formulation.
    Our model \emph{extends} that framework by introducing dynamic
    Brownian forcing.
\end{enumerate}

\subsection{Summary of notation}

For the reader's convenience, \Cref{tab:notation} collects the
principal symbols used in the mathematical development
(\Cref{sec:FPE,sec:PFODE,sec:method}).

\begin{table}[ht]
\centering
\caption{Summary of notation.}
\label{tab:notation}
\renewcommand{\arraystretch}{1.25}
\small
\begin{tabular}{cl}
\toprule
\textbf{Symbol} & \textbf{Description} \\
\midrule
\multicolumn{2}{l}{\textit{Domain and probability space}} \\
$\mathcal{D}=[0,L]$ & Highway segment \\
$Q_T=\mathcal{D}\times(0,T]$ & Spatio-temporal domain \\
$(\Omega,\mathcal{F},\{\mathcal{F}_t\},\PP)$
  & Filtered probability space \\
$W^k_t$\;$(k=1,\dots,K)$ & Independent standard Brownian motions \\[4pt]
\multicolumn{2}{l}{\textit{Traffic variables and flux}} \\
$\rho(x,t)$ & Traffic density (random field) \\
$\hat\rho$ & Density-space coordinate (deterministic query variable) \\
$\rho_{\max}$ & Jam density \\
$\rho_0(x)$ & Initial density profile \\
$f(\rho)=\rho\,v(\rho)$ & Flux function (fundamental diagram) \\
$v(\rho)$ & Speed--density relation \\
$u_f$ & Free-flow speed \\[4pt]
\multicolumn{2}{l}{\textit{Noise structure}} \\
$\sigma_k(\rho)$ & Density-dependent noise intensity (mode $k$) \\
$e_k(x)$ & Spatial basis function (mode $k$) \\
$\Sigma^2(\rho,x)$ & Aggregate diffusion:
  $\sum_k\sigma_k(\rho)^2 e_k(x)^2$ \\
$D(\rho,x)$ & Half-diffusion: $\tfrac{1}{2}\Sigma^2(\rho,x)$ \\[4pt]
\multicolumn{2}{l}{\textit{Pointwise process and drift}} \\
$Y_t^x$ & Pointwise process: $\rho(x,t)$ at fixed $x$ \\
$\beta_t^x$ & Pathwise drift:
  $-f'(Y_t^x)\,\partial_x\rho(x,t)$ \\
$b(\hat\rho,x,t)$ & Conditional drift:
  $\EE[\beta_t^x\mid Y_t^x=\hat\rho]$ \\
$b_\phi(\hat\rho,x,t)$ & Learned advection-closure module \\[4pt]
\multicolumn{2}{l}{\textit{Distributional quantities}} \\
$p(\hat\rho;x,t)$ & One-point marginal density of $\rho(x,t)$ \\
$s(\hat\rho;x,t)$ & Score function:
  $\partial_{\hat\rho}\log p$ \\
$J(\hat\rho;x,t)$ & Probability flux at density-space boundary \\
$v_{\mathrm{PF}}$ & Probability-flow velocity (exact) \\[4pt]
\multicolumn{2}{l}{\textit{Learning architecture}} \\
$s_\theta$ & Score network (parameters $\theta$) \\
$v_{\theta,\phi}$ & Closed probability-flow velocity \\
$\mathcal{R}_{\theta,\phi}$ & Score-form FPE residual \\
$p_\theta$ & Reconstructed distributional estimate \\
$\mathcal{L}_{\text{SM}},\;
 \mathcal{L}_{\text{PF}},\;
 \mathcal{L}_{\text{BC}}$
  & Score matching, physics, and boundary losses \\
$\lambda,\;\lambda_{\text{BC}}$
  & Loss weighting hyperparameters \\
$\sigma_{\text{DSM}}$ & DSM perturbation noise scale \\
$\rho_*$ & Reference density for quadrature integration \\
\bottomrule
\end{tabular}
\end{table}

\section{The Fokker--Planck Equation}\label{sec:FPE}

\subsection{Pointwise semimartingale and conditional drift}

At each fixed spatial location $x\in\mathcal{D}$, define the scalar
process
\[
  Y_t^x := \rho(x,t).
\]
Under Assumption~\ref{ass:smooth}, $Y_t^x$ is a continuous semimartingale and satisfies
\begin{equation}\label{eq:scalar-SDE}
  \dd Y_t^x = \beta_t^x\,\dd t
  + \sum_{k=1}^K \sigma_k(Y_t^x)\,e_k(x)\,\dd W_t^k,
\end{equation}
with pathwise drift
\begin{equation}\label{eq:mu-def}
  \beta_t^x := -f'\!\left(Y_t^x\right)\,\partial_x\rho(x,t).
\end{equation}
The dependence on $\partial_x\rho$ shows that the pointwise process is
not closed when viewed in isolation.  In transportation terms, the local
stochastic density evolution still carries unresolved information about
the upstream and downstream traffic state through the conservation-law
gradient.  The correct one-point forward equation therefore involves a
conditional drift coefficient rather than the raw pathwise drift.

\begin{definition}[Conditional drift coefficient]\label{def:conditional-drift}
For $(\hat\rho,x,t)\in(0,\rho_{\max})\times Q_T$, define
\begin{equation}\label{eq:b-def}
  b(\hat\rho,x,t)
  := \EE\!\left[\beta_t^x\mid Y_t^x=\hat\rho\right]
  = \EE\!\left[-f'\!\left(\rho(x,t)\right)\partial_x\rho(x,t)
    \mid \rho(x,t)=\hat\rho\right].
\end{equation}
\end{definition}

\begin{remark}[Transportation interpretation]\label{rem:closure}
The coefficient $b(\hat\rho,x,t)$ is the one-point imprint of the LWR
advection term.  It is exact, but it is not closed at the one-point
level unless it is approximated, modelled, or learned.  
\end{remark}

\subsection{Statement and proof}

\begin{proposition}[Exact one-point Fokker--Planck equation]\label{prop:FPE}
Suppose Assumptions~\ref{ass:flux}--\ref{ass:density} hold and that the
conditional drift coefficient $b$ in~\eqref{eq:b-def} is measurable and
locally bounded. Then, for each fixed $x\in\mathcal{D}$, the one-point
marginal density $p(\hat\rho;x,t)$ of $Y_t^x=\rho(x,t)$ satisfies
\begin{equation}\label{eq:FPE}
  \partial_t p(\hat\rho;x,t)
  = -\partial_{\hat\rho}\!\left[b(\hat\rho,x,t)\,p(\hat\rho;x,t)\right]
  + \frac{1}{2}\partial_{\hat\rho}^2\!\left[\Sigma^2(\hat\rho,x)
    \,p(\hat\rho;x,t)\right],
\end{equation}
with initial condition
\begin{equation}\label{eq:FPE-initial}
  p(\hat\rho;x,0)=\delta\!\left(\hat\rho-\rho_0(x)\right),
\end{equation}
understood in the distributional sense (equivalently, one may begin
from a mollified initial law $\delta_\epsilon(\hat\rho-\rho_0(x))$ and
let $\epsilon\downarrow 0$), and zero probability-flux boundary
condition
\begin{equation}\label{eq:zero-flux-bc}
  J(\hat\rho;x,t)
  := b(\hat\rho,x,t)\,p(\hat\rho;x,t)
   - \frac{1}{2}\partial_{\hat\rho}\!\left[\Sigma^2(\hat\rho,x)
     \,p(\hat\rho;x,t)\right] = 0,
  \qquad \hat\rho\in\{0,\rho_{\max}\}.
\end{equation}
\end{proposition}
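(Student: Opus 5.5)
The plan is the standard weak-form (Kolmogorov forward) argument applied to the scalar semimartingale $Y_t^x$ of \eqref{eq:scalar-SDE}, with the pathwise drift replaced by its conditional expectation (the mimicking or Gyöngy projection idea). Fix $x\in\mathcal{D}$ and take a test function $\varphi\in C_c^\infty((0,\rho_{\max}))$. Apply It\^{o}'s formula to $\varphi(Y_t^x)$. This is legitimate by Assumption~\ref{ass:smooth}(ii), and the quadratic variation of $Y^x$ is $\Sigma^2(Y_t^x,x)\,\dd t$ because the $W^k$ are independent. The result is
\[
\varphi(Y_t^x)=\varphi(\rho_0(x))+\int_0^t\Bigl[\varphi'(Y_s^x)\beta_s^x+\tfrac12\Sigma^2(Y_s^x,x)\varphi''(Y_s^x)\Bigr]\dd s+M_t,
\]
where $M_t=\sum_k\int_0^t\varphi'(Y_s^x)\sigma_k(Y_s^x)e_k(x)\,\dd W_s^k$.

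Since $\varphi'$, $\sigma_k$ and $e_k$ are bounded on the relevant sets, $M$ is a true martingale, and taking expectations removes it. For the drift term I will use the tower property:
\[
\EE[\varphi'(Y_s^x)\beta_s^x]=\EE\bigl[\varphi'(Y_s^x)\,\EE[\beta_s^x\mid Y_s^x]\bigr]=\int\varphi'(\hat\rho)\,b(\hat\rho,x,s)\,p(\hat\rho;x,s)\,\dd\hat\rho .
\]
This requires $\beta_s^x$ to be integrable on the support of $\varphi'$. Local boundedness of $b$ together with the compact support of $\varphi$ gives integrability of the right-hand side. Integrability of $\beta$ itself I would get from $f'$ bounded on $[0,\rho_{\max}]$ (Assumption~\ref{ass:flux}), plus an assumed integrability of $\partial_x\rho$; if needed I will add this explicitly or localise with stopping times. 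The diffusion term is a plain integral against $p$.

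Next, differentiate in $t$, using $p\in C^{2,1,1}$ (Assumption~\ref{ass:density}), and integrate by parts in $\hat\rho$: once for the drift term and twice for the diffusion term. The compact support of $\varphi$ kills the boundary terms, so \eqref{eq:FPE} holds in the sense of distributions on $(0,\rho_{\max})$. Smoothness of $p$, $\Sigma^2$ and $b\,p$ then upgrades this to a pointwise identity. If $b$ is only measurable, the upgrade is to an identity holding a.e., or to the weak form. The initial condition \eqref{eq:FPE-initial} follows because $Y_0^x=\rho_0(x)$ is deterministic: the law at time $0$ is $\delta_{\rho_0(x)}$, and $t\mapsto\EE\varphi(Y_t^x)$ is continuous by path continuity and bounded convergence. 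This gives weak convergence of $p(\cdot;x,t)$ to the Dirac mass as $t\downarrow0$.

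For the zero-flux condition \eqref{eq:zero-flux-bc}, I will repeat the computation with $\varphi\in C^\infty([0,\rho_{\max}])$ that need not vanish at the endpoints. Assumption~\ref{ass:smooth}(iii) keeps $Y^x$ inside $(0,\rho_{\max})$, so It\^{o}'s formula and the expectation identity are unchanged. Taking $\varphi\equiv1$ gives $\frac{\dd}{\dd t}\int p=0$, and integrating by parts produces boundary terms of the form $[\varphi J]_0^{\rho_{\max}}$ together with a term $\tfrac12[\varphi'\Sigma^2p]_0^{\rho_{\max}}$. The second term vanishes because $\Sigma^2=0$ at the endpoints (Assumption~\ref{ass:noise}(i)), provided $p$ stays bounded near them. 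Comparing with the interior equation then forces $\varphi(\rho_{\max})J(\rho_{\max})-\varphi(0)J(0)=0$ for all such $\varphi$, hence $J=0$ at both endpoints, interpreted as one-sided limits.

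I expect this boundary step to be the main obstacle. It needs $p$ and $b\,p$ to have well-defined limits at $0$ and $\rho_{\max}$, which is not literally part of Assumption~\ref{ass:density} since that only covers the open interval. I would handle this by stating that $J$ is understood as the limit $\hat\rho\to0^+,\rho_{\max}^-$, with a mild integrability or extension hypothesis imposed if necessary. A secondary technical point is integrability of $\beta_t^x$, which is needed to define $b$ at all; it can be handled by stopping at $\tau_n=\inf\{t:|\partial_x\rho(x,t)|\ge n\}$ and letting $n\to\infty$.
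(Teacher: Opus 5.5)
Your proposal is correct and follows the paper's proof essentially step for step: It\^{o}'s formula for $\varphi(Y_t^x)$ with compactly supported $\varphi$, removal of the martingale term, the tower property to replace $\beta_t^x$ by $b$, and integration by parts to obtain \eqref{eq:FPE} in the distributional sense. You go beyond the paper on the initial and boundary conditions. The paper only asserts the zero-flux condition as ``conservation of total probability,'' which by itself gives only $J(0)=J(\rho_{\max})$; your argument with test functions that do not vanish at the endpoints, using $\Sigma^2=0$ there and one-sided limits of $p$ and $bp$, yields $J=0$ at each endpoint separately, under the endpoint regularity you rightly note is not contained in Assumption~\ref{ass:density}.
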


\begin{proof}
Fix $x\in\mathcal{D}$ and let $Y_t:=Y_t^x$. For any test function
$\varphi\in C_c^2((0,\rho_{\max}))$, It\^{o}'s formula applied to
\eqref{eq:scalar-SDE} yields
\begin{equation}\label{eq:Ito-applied}
  \dd\varphi(Y_t)
  = \varphi'(Y_t)\,\beta_t\,\dd t
    + \frac{1}{2}\varphi''(Y_t)\,\Sigma^2(Y_t,x)\,\dd t
    + \varphi'(Y_t)\sum_{k=1}^K \sigma_k(Y_t)e_k(x)\,\dd W_t^k.
\end{equation}
Taking expectations and using the martingale property of the stochastic
integral,
\begin{equation}\label{eq:expect-identity}
  \frac{\dd}{\dd t}\EE[\varphi(Y_t)]
  = \EE\!\left[\varphi'(Y_t)\beta_t
    + \frac{1}{2}\varphi''(Y_t)\Sigma^2(Y_t,x)\right].
\end{equation}
By conditioning on $Y_t$ and using
Definition~\ref{def:conditional-drift},
\begin{align}\label{eq:weak-form}
  \EE[\varphi'(Y_t)\beta_t]
  &= \EE\!\left[\EE\!\left[\varphi'(Y_t)\beta_t\mid Y_t\right]\right]
   = \EE\!\left[\varphi'(Y_t)\,\EE[\beta_t\mid Y_t]\right] \notag\\
  &= \int_0^{\rho_{\max}} \varphi'(\hat\rho)
     \,b(\hat\rho,x,t)\,p(\hat\rho;x,t)\,\dd\hat\rho,
\end{align}
and similarly
\begin{equation}
  \EE\!\left[\frac{1}{2}\varphi''(Y_t)\Sigma^2(Y_t,x)\right]
  = \frac{1}{2}\int_0^{\rho_{\max}} \varphi''(\hat\rho)
    \,\Sigma^2(\hat\rho,x)\,p(\hat\rho;x,t)\,\dd\hat\rho.
\end{equation}
Therefore,
\begin{equation}
  \int_0^{\rho_{\max}} \varphi(\hat\rho)\,\partial_t p(\hat\rho;x,t)
  \,\dd\hat\rho
  = \int_0^{\rho_{\max}} \varphi'(\hat\rho)
    \,b(\hat\rho,x,t)\,p(\hat\rho;x,t)\,\dd\hat\rho
  + \frac{1}{2}\int_0^{\rho_{\max}} \varphi''(\hat\rho)
    \,\Sigma^2(\hat\rho,x)\,p(\hat\rho;x,t)\,\dd\hat\rho.
\end{equation}
Integrating by parts in $\hat\rho$ gives
\[
  \int_0^{\rho_{\max}} \varphi\,\partial_t p\,\dd\hat\rho
  = \int_0^{\rho_{\max}} \varphi
    \left[-\partial_{\hat\rho}(bp)
    + \frac{1}{2}\partial_{\hat\rho}^2(\Sigma^2 p)\right]\dd\hat\rho.
\]
Since this holds for all $\varphi\in C_c^2((0,\rho_{\max}))$,
\eqref{eq:FPE} follows in the distributional sense; the assumed
regularity upgrades it to the classical form. The zero-flux condition
\eqref{eq:zero-flux-bc} expresses conservation of total probability on
$[0,\rho_{\max}]$.
\end{proof}

\begin{remark}[On the regularity of the conditional drift]
\label{rem:b-regularity}
The hypothesis that $b(\hat\rho,x,t)$ is measurable and locally bounded
is a non-trivial regularity condition on the joint law of the SPDE
solution.  The pathwise drift
$\beta_t^x = -f'(\rho(x,t))\,\partial_x\rho(x,t)$ involves the
spatial gradient $\partial_x\rho$, which is itself a random field.
The conditional expectation
$b(\hat\rho,x,t) = \EE[-f'(\rho)\,\partial_x\rho \mid \rho(x,t)=\hat\rho]$
integrates over all spatial-gradient configurations consistent with
the local density equalling $\hat\rho$; its boundedness therefore
depends on the joint distribution of $(\rho,\partial_x\rho)$.
Under the smooth-regime Assumption~\ref{ass:smooth}, $\partial_x\rho$
is bounded pathwise and the conditional expectation inherits local
boundedness.  Verifying this condition from first principles in
broader regimes would require detailed regularity estimates on the
SPDE that lie outside the scope of this paper.
\end{remark}

\begin{remark}[Scope of the one-point equation]\label{rem:no-full-x-fpe}
Equation~\eqref{eq:FPE} is a one-point forward equation parametrised by
physical location $x$.  It is not, by itself, a closed PDE in
$(\hat\rho,x,t)$ with hyperbolic transport in $x$.  The conservation-law
structure enters through the conditional drift coefficient
$b(\hat\rho,x,t)$, which carries the unresolved spatial coupling.
\end{remark}

\begin{remark}[Practical closure for transportation methodology]
\label{rem:closure-options}
To obtain a computable surrogate for traffic state estimation, one may
introduce a closure of the form
\begin{equation}\label{eq:closure-general}
  b(\hat\rho,x,t) \approx b_\phi(\hat\rho,x,t),
\end{equation}
with either a direct auxiliary drift model $b_\phi$ or a structured
closure such as
\begin{equation}\label{eq:closure-m}
  b_\phi(\hat\rho,x,t) = -f'(\hat\rho)\,m_\phi(\hat\rho,x,t).
\end{equation}
A narrower mean-field approximation is
\begin{equation}\label{eq:closure-meanfield}
  b_\phi(\hat\rho,x,t) = -f'(\hat\rho)\,\partial_x\bar\rho_\phi(x,t),
\end{equation}
where $\bar\rho_\phi$ is an auxiliary mean-state model.  In the present
paper, this is best described as a transportation-oriented closure
assumption rather than part of the exact theorem.
\end{remark}

\section{The Probability Flow ODE}\label{sec:PFODE}

\subsection{Reformulation as a continuity equation}

\begin{lemma}[Continuity equation form of the one-point FPE]\label{lem:cont}
Assume $p(\hat\rho;x,t)>0$ on $(0,\rho_{\max})\times Q_T$ and define the
score
\begin{equation}\label{eq:score-def}
  s(\hat\rho;x,t):=\partial_{\hat\rho}\log p(\hat\rho;x,t).
\end{equation}
Then equation~\eqref{eq:FPE} is equivalent to
\begin{equation}\label{eq:continuity}
  \partial_t p + \partial_{\hat\rho}
  \bigl[v_{\mathrm{PF}}(\hat\rho,x,t)\,p\bigr] = 0,
\end{equation}
with probability-flow velocity
\begin{equation}\label{eq:vPF}
  v_{\mathrm{PF}}(\hat\rho,x,t)
  = b(\hat\rho,x,t)
    - \frac{1}{2}\partial_{\hat\rho}\Sigma^2(\hat\rho,x)
    - \frac{1}{2}\Sigma^2(\hat\rho,x)
      \partial_{\hat\rho}\log p(\hat\rho;x,t).
\end{equation}
\end{lemma}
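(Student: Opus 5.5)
The plan is to rewrite the right-hand side of \eqref{eq:FPE} as minus the $\hat\rho$-derivative of a single flux, and then divide that flux by $p$. Write $J$ for the probability flux from \eqref{eq:zero-flux-bc}, namely $J = bp - \tfrac12\partial_{\hat\rho}(\Sigma^2 p)$. Since \eqref{eq:FPE} holds classically by Proposition~\ref{prop:FPE} and Assumption~\ref{ass:density}, it can be written as
\[
  \partial_t p + \partial_{\hat\rho} J = 0
  \qquad\text{on } (0,\rho_{\max})\times Q_T .
\]
This step only regroups terms. It uses $-\partial_{\hat\rho}(bp) + \tfrac12\partial_{\hat\rho}^2(\Sigma^2 p) = -\partial_{\hat\rho}\bigl[bp - \tfrac12\partial_{\hat\rho}(\Sigma^2 p)\bigr]$, which is valid because $\Sigma^2(\cdot,x)\in C^3$ by Assumption~\ref{ass:noise} and $p$ is $C^2$ in $\hat\rho$.

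Next I expand the diffusive part of the flux with the product rule:
\[
  \partial_{\hat\rho}(\Sigma^2 p) = (\partial_{\hat\rho}\Sigma^2)\,p + \Sigma^2\,\partial_{\hat\rho}p .
\]
The hypothesis $p>0$ gives $\partial_{\hat\rho} p = p\,\partial_{\hat\rho}\log p = p\,s$. Substituting,
\[
  J = p\Bigl[b - \tfrac12\partial_{\hat\rho}\Sigma^2 - \tfrac12\Sigma^2 s\Bigr] = v_{\mathrm{PF}}\,p ,
\]
with $v_{\mathrm{PF}}$ exactly as in \eqref{eq:vPF}. Plugging this into $\partial_t p + \partial_{\hat\rho}J = 0$ gives \eqref{eq:continuity}.

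For the converse direction I run the same identities backwards. If \eqref{eq:continuity} holds with $v_{\mathrm{PF}}$ given by \eqref{eq:vPF}, then $v_{\mathrm{PF}}\,p = bp - \tfrac12\partial_{\hat\rho}(\Sigma^2 p)$ identically. Differentiating in $\hat\rho$ recovers \eqref{eq:FPE}. So the two equations are equivalent pointwise on the open set where $p>0$.

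The only real obstacle is justifying the differentiation of $v_{\mathrm{PF}}\,p$, since $b$ is only assumed measurable and locally bounded. I will treat this at the same level as Proposition~\ref{prop:FPE}. There $\partial_{\hat\rho}(bp)$ is meaningful, in the distributional sense and classically under the assumed regularity. The identity $v_{\mathrm{PF}}\,p = J$ holds pointwise wherever $b$ is defined, so no derivative ever falls on $b$ alone, only on $bp$. To make this explicit, I will test against $\varphi\in C_c^2((0,\rho_{\max}))$ exactly as in the proof of Proposition~\ref{prop:FPE}. This shows that the weak forms of \eqref{eq:FPE} and \eqref{eq:continuity} coincide term by term, and that argument needs no differentiability of $b$.

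Finally, I will note that the zero-flux condition \eqref{eq:zero-flux-bc} translates into $v_{\mathrm{PF}}\,p=0$ at $\hat\rho\in\{0,\rho_{\max}\}$. Positivity of $p$ is only needed on the open interval, so nothing is required of $s$ at the endpoints.
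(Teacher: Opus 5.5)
Your proposal is correct and takes essentially the same route as the paper: expand $\partial_{\hat\rho}(\Sigma^2 p)$ by the product rule, use $\partial_{\hat\rho}p = p\,\partial_{\hat\rho}\log p$ (valid since $p>0$), and regroup the right-hand side of \eqref{eq:FPE} as $-\partial_{\hat\rho}[v_{\mathrm{PF}}\,p]$. The explicit converse and the weak-form observation that a derivative only ever falls on $bp$, never on $b$ alone, are consistent additions that the paper leaves implicit; every step is an identity, so the equivalence follows in both directions.
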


\begin{proof}
Starting from~\eqref{eq:FPE}, use the product rule:
\begin{align}
  \frac{1}{2}\partial_{\hat\rho}^2[\Sigma^2 p]
  &= \frac{1}{2}\partial_{\hat\rho}\Bigl[
      (\partial_{\hat\rho}\Sigma^2)\,p
      + \Sigma^2\partial_{\hat\rho}p\Bigr] \notag\\
  &= \frac{1}{2}\partial_{\hat\rho}\Bigl[
      \bigl(\partial_{\hat\rho}\Sigma^2
      + \Sigma^2\partial_{\hat\rho}\log p\bigr)p\Bigr].
\end{align}
Substituting this identity into~\eqref{eq:FPE} yields
\[
  \partial_t p
  = -\partial_{\hat\rho}\Bigl[\Bigl(
      b - \frac{1}{2}\partial_{\hat\rho}\Sigma^2
        - \frac{1}{2}\Sigma^2\partial_{\hat\rho}\log p
    \Bigr)p\Bigr],
\]
which is exactly~\eqref{eq:continuity}.
\end{proof}

\subsection{Statement and proof}

\begin{proposition}[Probability-flow representation of the one-point marginals]
\label{prop:PFODE}
Under the assumptions of Lemma~\ref{lem:cont}, and provided that
$v_{\mathrm{PF}}$ is locally Lipschitz in $\hat\rho$ on
$(0,\rho_{\max})$ (see \Cref{prop:wellposed} for sufficient
conditions), the following holds.  Fix any $t_0>0$ at which the
one-point marginal $p(\cdot;x,t_0)$ is absolutely continuous with
$p>0$ on $(0,\rho_{\max})$.  Then the deterministic ODE
\begin{equation}\label{eq:PFODE}
  \frac{\dd\hat\rho}{\dd t}
  = \underbrace{b(\hat\rho,x,t)}_{\text{(I) conditional advection}}
  \;\underbrace{-\;\frac{1}{2}\partial_{\hat\rho}\Sigma^2(\hat\rho,x)}_{\text{(II) It\^{o} drift}}
  \;\underbrace{-\;\frac{1}{2}\Sigma^2(\hat\rho,x)
      \partial_{\hat\rho}\log p(\hat\rho;x,t)}_{\text{(III) score}},
\end{equation}
transports $p(\cdot;x,t_0)$ forward: if $\hat\rho(t_0)\sim
p(\cdot;x,t_0)$, then $\hat\rho(t)\sim p(\cdot;x,t)$ for all
$t\in[t_0,T]$.
With a computable closure $b_\phi$, the practical transportation-model
counterpart is
\begin{equation}\label{eq:PFODE-closed}
  \frac{\dd\hat\rho}{\dd t}
  = b_\phi(\hat\rho,x,t)
    - \frac{1}{2}\partial_{\hat\rho}\Sigma^2(\hat\rho,x)
    - \frac{1}{2}\Sigma^2(\hat\rho,x)
      \partial_{\hat\rho}\log p(\hat\rho;x,t).
\end{equation}
\end{proposition}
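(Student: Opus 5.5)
The plan is to reduce the claim to uniqueness for the continuity equation \eqref{eq:continuity} of Lemma~\ref{lem:cont}. Fix $x$ and write $v:=v_{\mathrm{PF}}(\cdot,x,\cdot)$. Since $v$ is locally Lipschitz in $\hat\rho$ on $(0,\rho_{\max})$, the ODE \eqref{eq:PFODE} has a unique local flow $\Phi_{t_0,t}$ through every $\hat\rho_0\in(0,\rho_{\max})$. Let $q(\cdot,t):=(\Phi_{t_0,t})_\#p(\cdot;x,t_0)$ be the law of $\hat\rho(t)$ when $\hat\rho(t_0)\sim p(\cdot;x,t_0)$. The argument has three parts: (a) the flow is global on $[t_0,T]$; (b) $q$ solves \eqref{eq:continuity} weakly with the same velocity $v$ and the same initial datum; (c) $p(\cdot;x,t)$ also solves \eqref{eq:continuity}, by Lemma~\ref{lem:cont}, so uniqueness forces $q=p$.

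For (b), take $\varphi\in C_c^1((0,\rho_{\max}))$ and differentiate $\int\varphi\,\dd q_t=\int\varphi(\Phi_{t_0,t}(\hat\rho_0))\,p(\hat\rho_0;x,t_0)\,\dd\hat\rho_0$ in $t$ using the chain rule. This gives $\frac{\dd}{\dd t}\int\varphi\,\dd q_t=\int\varphi'\,v\,\dd q_t$, which is the weak form of \eqref{eq:continuity}.

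For uniqueness, I would not run a full DiPerna--Lions argument. In one dimension the positivity of $p$ gives a direct route. Since $p>0$ is $C^{2,1,1}$, the cumulative distribution $F(\hat\rho,t)=\int_0^{\hat\rho}p(r;x,t)\,\dd r$ is strictly increasing in $\hat\rho$. Integrating \eqref{eq:continuity} in $\hat\rho$ gives $\partial_t F+v\,\partial_{\hat\rho}F=-J(0)$, and the zero-flux condition \eqref{eq:zero-flux-bc} makes the right-hand side vanish. The boundary term equals $J(0)$ because $vp=bp-\tfrac12\partial_{\hat\rho}(\Sigma^2p)=J$ exactly. So $F$ is constant along characteristics, i.e.\ $F(\Phi_{t_0,t}(\hat\rho_0),t)=F(\hat\rho_0,t_0)$. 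Hence $P(\hat\rho(t)\le\Phi_{t_0,t}(\hat\rho_0))=P(\hat\rho(t_0)\le\hat\rho_0)=F(\hat\rho_0,t_0)=F(\Phi_{t_0,t}(\hat\rho_0),t)$, using that the flow is monotone, as one-dimensional ODE flows are order-preserving. Because $\Phi_{t_0,t}$ is a bijection of $(0,\rho_{\max})$, this identifies the law of $\hat\rho(t)$ with $p(\cdot;x,t)$. This route replaces the abstract uniqueness step by a transport-of-CDF computation.

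The main obstacle is (a): showing that trajectories neither reach $\{0,\rho_{\max}\}$ nor blow up before $T$, since $v$ is only locally Lipschitz and the score term typically diverges at the endpoints. The CDF identity itself handles this. Suppose a trajectory from an interior $\hat\rho_0$ reached $0$ at some $t^*\le T$. Then $F(\Phi_{t_0,t}(\hat\rho_0),t)\to F(0,t^*)=0$, contradicting the constant positive value $F(\hat\rho_0,t_0)$; the same argument with $1-F$ rules out reaching $\rho_{\max}$. This is rigorous provided $F$ is jointly continuous up to the boundary, which follows from $p\in C^{2,1,1}$ and integrability. Finally, \eqref{eq:PFODE-closed} is not a separate claim. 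It is the same ODE with $b$ replaced by $b_\phi$, and the transport statement holds for it only insofar as $b_\phi=b$, so I would note that no further proof is required.
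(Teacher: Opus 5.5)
Your proposal is correct, and it follows a different and more careful route than the paper. The paper's proof is essentially one line: \eqref{eq:continuity} is the Liouville equation for $v_{\mathrm{PF}}$, and under the local Lipschitz regularity of \Cref{prop:wellposed}(i)--(ii) the characteristic flow pushes $p(\cdot;x,t_0)$ forward to $p(\cdot;x,t)$. That argument leaves two things implicit. It assumes a uniqueness theorem for the continuity equation. It also never checks whether characteristics leave $(0,\rho_{\max})$ before $T$: \Cref{prop:wellposed}(ii) only gives existence up to exit from $I_\varepsilon$, and part (iii) needs boundary-compatibility hypotheses that the statement does not assume.

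Your argument uses the one-dimensional structure instead. Since $v_{\mathrm{PF}}\,p=J$ exactly and \eqref{eq:zero-flux-bc} holds, the CDF $F$ is a first integral of the flow. That one fact gives both missing ingredients:
\begin{itemize}
\item global existence, because a trajectory hitting an endpoint would force $F$ to $0$ or $1$;
\item identification of the law, via the monotone bijection $\Phi_{t_0,t}=F(\cdot,t)^{-1}\circ F(\cdot,t_0)$, with no abstract uniqueness theorem needed.
\end{itemize}
Your route also shows that the zero-flux condition is genuinely needed. For example, $p\equiv 1/\rho_{\max}$ with $v\equiv c\neq 0$ solves the interior continuity equation, yet its characteristic flow loses mass through an endpoint. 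So the interior equation alone cannot yield the transport claim, and the paper's sketch silently depends on the boundary condition. The price of your approach is that it is intrinsically one-dimensional, which is harmless here because the score is scalar.

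A few small corrections:
\begin{itemize}
\item Integrating \eqref{eq:continuity} over $(0,\hat\rho)$ gives $\partial_t F+v\,\partial_{\hat\rho}F=+J(0)$, not $-J(0)$. This is immaterial since $J(0)=0$.
\item The interchange of $\partial_t$ and $\int_0^{\hat\rho}$ is cleaner if you integrate over $[a,\hat\rho]$, then let $a\downarrow 0$ using the zero-flux limit in time-integrated form.
\item The joint continuity of $F$ up to $\hat\rho=0$ does hold, but the justification is Scheff\'e's lemma: interior continuity of $p$ in $t$ plus unit mass gives $L^1$-continuity of $t\mapsto p(\cdot;x,t)$.
\item Step (b) is redundant, since the CDF computation identifies the law directly.
\end{itemize}
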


\begin{proof}
For $t\ge t_0$, the one-point law $p(\cdot;x,t)$ is absolutely
continuous and positive on $(0,\rho_{\max})$ by assumption, so
$v_{\mathrm{PF}}$ is well-defined.  Equation~\eqref{eq:continuity}
is the Liouville equation associated with the velocity
field~\eqref{eq:vPF}.  Under the Lipschitz regularity assumed above
(established in \Cref{prop:wellposed}(i)--(ii)), the characteristic
flow is well-defined and pushes forward $p(\cdot;x,t_0)$ to
$p(\cdot;x,t)$ for each $t\in[t_0,T]$.
Equation~\eqref{eq:PFODE-closed} is the computable transportation
surrogate obtained when the exact conditional drift $b$ is replaced by
a closure model $b_\phi$.
\end{proof}

\begin{remark}[Initial time]
The FPE initial condition~\eqref{eq:FPE-initial} is a delta mass,
which is measure-valued rather than absolutely continuous.  A
deterministic transport equation cannot spread a point mass into a
smooth density, so the PF-ODE representation applies strictly for
$t>0$, once the parabolic diffusion has already spread the law into a
smooth positive density.  In practice, one may equivalently formulate
the PF-ODE from a mollified initial law
$p_\epsilon(\hat\rho;x,0) = \delta_\epsilon(\hat\rho - \rho_0(x))$
and consider the limit $\epsilon\downarrow 0$; this is consistent with
the mollified initial condition used in the boundary-condition loss
$\mathcal{L}_{\text{BC}}$ of the training algorithm
(\Cref{sec:loss}).
\end{remark}

\subsection{Physical decomposition}

The three terms in~\eqref{eq:PFODE} have distinct transportation
interpretations.

\noindent\textbf{Term~(I): Conditional advection.}
The coefficient $b(\hat\rho,x,t)$ is the one-point imprint of the LWR
transport term.  It is not a free drift; it is the conditional
expectation of the pathwise advection $-f'(\rho)\partial_x\rho$ given
that the local density equals $\hat\rho$.

\noindent\textbf{Term~(II): Noise-induced drift (It\^{o} correction).}
This term is induced by state-dependent diffusion.  If the diffusion
intensity is strongest near capacity, it systematically redistributes
probability away from the most noise-sensitive density range.

\noindent\textbf{Term~(III): Score-driven correction.}
This term is the deterministic correction that converts stochastic
spreading into an equivalent probability-flow transport.  In the
learning architecture, it is the component represented by the score
network.

\begin{remark}[Contrast with PIDL]
In the PIDL framework of \citet{shi2021physics}, the PINN residual
enforces that a single trajectory satisfies a deterministic traffic-flow
model.  Here, the residual operates at the distributional level and
requires both a score representation and a transportation closure for
the unresolved advection term.
\end{remark}

\subsection{Well-posedness}

\begin{proposition}[Interior well-posedness and boundary compatibility]
\label{prop:wellposed}
Assume that $b\in C^1((0,\rho_{\max})\times Q_T)$,
$\Sigma^2\in C^2([0,\rho_{\max}]\times\mathcal{D})$, and
$p\in C^{2,1,1}((0,\rho_{\max})\times Q_T)$ with $p>0$ on the
interior. Then:
\begin{enumerate}[label=\textup{(\roman*)}]
  \item On every compact interval
    $I_\varepsilon=[\varepsilon,\rho_{\max}-\varepsilon]$, the score
    $s=\partial_{\hat\rho}\log p$ and the velocity
    $v_{\mathrm{PF}}$ are bounded and locally Lipschitz in $\hat\rho$.
  \item For every initial value $\hat\rho(0)\in I_\varepsilon$,
    ODE~\eqref{eq:PFODE} has a unique solution up to the first exit time
    from $I_\varepsilon$.
  \item If, in addition, the boundary compatibility conditions
    \begin{equation}\label{eq:boundary-compatibility}
      \liminf_{\hat\rho\downarrow 0} v_{\mathrm{PF}}(\hat\rho,x,t)\ge 0,
      \qquad
      \limsup_{\hat\rho\uparrow \rho_{\max}} v_{\mathrm{PF}}(\hat\rho,x,t)\le 0,
    \end{equation}
    hold for all $(x,t)\in Q_T$, then the interval
    $[0,\rho_{\max}]$ is forward invariant and the solution exists on
    $[0,T]$.
\end{enumerate}
\end{proposition}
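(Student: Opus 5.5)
For (i) I would fix $(x,t)\in Q_T$ and work on the compact set $I_\varepsilon=[\varepsilon,\rho_{\max}-\varepsilon]$. Since $p\in C^{2,1,1}$ and $p>0$ on the interior, $p$ is continuous and strictly positive on $I_\varepsilon$, so $\min_{I_\varepsilon}p(\cdot;x,t)=:c_\varepsilon>0$. Hence $s=\partial_{\hat\rho}p/p$ is $C^1$ in $\hat\rho$ on $I_\varepsilon$, with derivative
\[
\partial_{\hat\rho}s=\frac{\partial_{\hat\rho}^2p}{p}-\frac{(\partial_{\hat\rho}p)^2}{p^2}.
\]
This derivative is continuous on the compact set, hence bounded, so $s$ is bounded and Lipschitz there. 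Next, $\Sigma^2\in C^2$ gives $\Sigma^2$ and $\partial_{\hat\rho}\Sigma^2$ both $C^1$ and bounded with bounded $\hat\rho$-derivatives on $I_\varepsilon$, and $b\in C^1$ gives the same for $b$. Therefore $v_{\mathrm{PF}}$ in \eqref{eq:vPF} is a sum of products of bounded Lipschitz functions, and is bounded and Lipschitz on $I_\varepsilon$. If one wants these bounds uniformly in $t$ along a time interval, I would take $(\hat\rho,t)\in I_\varepsilon\times[t_0,T]$ with $t_0>0$ and use joint continuity there. Near $t=0$ the delta initial law \eqref{eq:FPE-initial} prevents uniformity, which is consistent with the Remark on the initial time.

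For (ii) I would apply Picard--Lindelöf to the nonautonomous ODE \eqref{eq:PFODE}. The right-hand side is continuous in $t$ (using $b\in C^1$ in $t$ and $p\in C^{2,1,1}$) and Lipschitz in $\hat\rho$ on $I_\varepsilon$, uniformly on compact time windows. This gives a unique local solution. The standard continuation argument then extends it until the first time it leaves $I_\varepsilon$, because the solution cannot blow up while it stays in a compact set where the vector field is bounded. Uniqueness follows by Gronwall on any common interval of existence.

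For (iii) I would run a barrier argument. Take an initial value in $(0,\rho_{\max})$ and let $\tau$ be the maximal existence time in the open interval. Suppose $\tau<T$. Then $\hat\rho(t)$ must approach $0$ or $\rho_{\max}$, since otherwise it stays in some $I_\varepsilon$ and (ii) extends it. Suppose it approaches $0$. By the compatibility condition \eqref{eq:boundary-compatibility}, for every $\eta>0$ there is $\delta>0$ with $v_{\mathrm{PF}}>-\eta$ on $(0,\delta)$. So on the final stretch, $\hat\rho$ can decrease at rate at most $\eta$, and it cannot reach $0$ in finite time unless it starts within $\eta(\tau-t)$ of $0$.

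\textbf{This is the main obstacle.} The liminf condition only gives $v_{\mathrm{PF}}\ge -o(1)$, not $v_{\mathrm{PF}}\ge 0$, so a trajectory could still creep to the boundary in finite time. I would handle this by interpreting forward invariance in the closed sense. Either the solution stays in the open interval, or it hits the boundary and is continued as stationary there, since the limiting velocity there is $\ge0$ (respectively $\le0$) and does not push it outward. This yields a solution on $[0,T]$ taking values in $[0,\rho_{\max}]$. The symmetric argument handles $\rho_{\max}$ using the limsup condition. If a stronger statement is wanted, I would add a mild one-sided Osgood-type condition, for instance $v_{\mathrm{PF}}(\hat\rho)\ge -C\hat\rho$ near $0$. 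Gronwall then gives $\hat\rho(t)\ge\hat\rho(0)e^{-Ct}>0$, so the open interval is invariant. I would note that the zero-flux condition \eqref{eq:zero-flux-bc} is the natural source of such behaviour.
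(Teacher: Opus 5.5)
Your treatment of (i) and (ii) is the paper's argument written out. The paper's proof cites only smoothness plus positivity of $p$ for (i) and Picard--Lindel\"of for (ii); your bound on $\partial_{\hat\rho}s=\partial_{\hat\rho}^2p/p-(\partial_{\hat\rho}p)^2/p^2$ over $I_\varepsilon\times[t_0,T]$ and your continuation argument supply the omitted detail. You are also right to start at some $t_0>0$. The density $p$ is only assumed smooth on $Q_T=\mathcal{D}\times(0,T]$, and the law at $t=0$ is a point mass, so the initial time in (ii) has to be read that way.

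For (iii), the paper's entire argument is that the claim ``follows from the inward-pointing boundary condition.'' Your objection to this is correct. Non-strict conditions allow finite-time hitting (e.g.\ $v_{\mathrm{PF}}=-\sqrt{\hat\rho}$ near $0$), and $v_{\mathrm{PF}}$ is undefined at the endpoints, so the literal conclusion does not follow. Your own version still has three weak points.

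\emph{Uniformity in $t$.} The condition \eqref{eq:boundary-compatibility} is assumed separately for each $t$, so your $\delta$ depends on $t$. Take a smooth bump of height $-1$ travelling along $\hat\rho=\tau-t$ with width proportional to $\tau-t$, and set $v_{\mathrm{PF}}\equiv0$ for $t\ge\tau$. This field is $C^1$ on the open domain and satisfies the liminf condition at every $t$. Yet $\hat\rho(t)=\tau-t$ reaches $0$ with unit speed. So even your ``rate at most $\eta$'' bound needs uniformity on $[t_0,T]$.

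\emph{Stationary continuation.} Continuing the trajectory as stationary at $0$ gives a solution only if you extend $v_{\mathrm{PF}}$ by $0$ at the endpoint. It is not a solution if the one-sided limit later becomes strictly positive. Your ``closed-sense'' reading is therefore a convention, not a proof. Your Osgood-type hypothesis, with $C$ uniform in $t$, is the genuine repair.

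\emph{Role of zero flux.} Zero flux does not imply the Osgood hypothesis, because $J=v_{\mathrm{PF}}\,p$ controls only the product. It does, however, yield something stronger when $p$ actually solves \eqref{eq:continuity} with \eqref{eq:zero-flux-bc}. Consider the mass below a trajectory, $F(t)=\int_0^{\hat\rho(t)}p(\cdot;x,t)\,\dd\hat\rho$. Modulo mild uniformity in $t$, it satisfies $F'(t)=\lim_{\hat\rho\downarrow0}J=0$, so $F\equiv F(t_0)>0$. If $p(\cdot;x,t)$ is uniformly integrable near $0$ on $[t_0,T]$, this forbids the trajectory from approaching $0$ before $T$, even along a sequence of times. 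The symmetric argument at $\rho_{\max}$ then gives invariance of the open interval and existence on $[t_0,T]$, without using \eqref{eq:boundary-compatibility} at all.
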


\begin{proof}
Part~(i) follows from the assumed smoothness of $b$, $\Sigma^2$, and
$p$, together with positivity of $p$ on the interior.  Part~(ii) is a
direct application of the Picard--Lindel\"of theorem.
Part~(iii) follows from the inward-pointing boundary condition in
\eqref{eq:boundary-compatibility}.
\end{proof}

\begin{remark}[Scope of the well-posedness statement]\label{rem:scope-claim}
	The well-posedness statement established here is local to the interior
	state space, together with compatibility at the physical boundaries.
	A global result does not follow from $f'(0)>0$ and
	$f'(\rho_{\max})<0$ alone, since these conditions do not determine the
	sign of the conditional drift at the boundaries.
\end{remark}

\begin{remark}[Shock regime: analysis and practical considerations]
\label{rem:shock}
When Assumption~\ref{ass:smooth} fails and shocks form, the pointwise
conditional drift $b(\hat\rho,x,t)
= \EE[-f'(\rho)\partial_x\rho\mid\rho(x,t)=\hat\rho]$ involves
$\partial_x\rho$, which diverges at the shock location.  This is the
principal theoretical limitation of the smooth-regime theory.  We
discuss the issue along four dimensions.

\textit{(i) Methodological intuition for the score near shocks.}
Under the smooth-regime assumptions, the Fokker--Planck
equation~\eqref{eq:FPE} is parabolic in $\hat\rho$, and the score
$s = \partial_{\hat\rho}\log p$ is bounded on the interior.  When
shocks form and the conditional drift $b$ becomes singular, the
smooth-regime assumptions no longer hold, and we cannot formally
guarantee that the score remains bounded without additional
regularity analysis (e.g., under viscous regularisation).
However, the following heuristic is plausible: since diffusion acts in
the $\hat\rho$-direction and the singularity is in the $x$-direction
(through $\partial_x\rho$), the density-space smoothing from the
Brownian forcing may preserve score regularity even when the spatial
field is discontinuous.  Rigorous verification of this expectation is
deferred to future work.

\textit{(ii) Distributional structure near shocks.}
Near a shockwave, the density $\rho(x,t)$ transitions rapidly between
an upstream state $\rho_-$ and a downstream state $\rho_+$.  The
one-point distribution $p(\hat\rho;x,t)$ at a location traversed by
the shock is expected to become bimodal, with mass concentrated near
both $\rho_-$ and $\rho_+$.  This is precisely the kind of
non-Gaussian distributional structure that our framework is designed
to capture and that point-estimate methods cannot represent.

\textit{(iii) Learned closure as a regularised surrogate.}
In the proposed architecture (\Cref{sec:method}), the conditional
drift is not evaluated analytically but is represented by the learned
closure module $b_\phi(\hat\rho,x,t)$.  In practice, the neural
closure can act as a regularised surrogate in shock-affected regimes,
but this should be interpreted as an approximation heuristic rather
than a consequence of the present theory.  The physics loss
$\mathcal{L}_{\text{PF}}$ penalises the FPE residual at collocation
points; near shocks, the residual will typically be larger, and the
learned closure adapts subject to the network's smoothness prior.
This is analogous to how PINNs for deterministic conservation laws
often behave in practice: the network solution approximates a viscous
regularisation without explicitly introducing artificial viscosity.

\textit{(iv) Formal regularisation strategies.}
For applications requiring stronger theoretical guarantees, two
strategies are available:
(R1)~\emph{Viscous regularisation}: augmenting the SLWR dynamics with a
second-order viscous term, i.e., replacing $-\partial_x f(\rho)$ with
$-\partial_x f(\rho) + \epsilon\,\partial_{xx}\rho$ for small
$\epsilon > 0$, restores Assumption~\ref{ass:smooth} globally, and the
entire theory (FPE, PF-ODE, well-posedness) applies without
modification.  The resulting solutions converge to the entropy solution
as $\epsilon\downarrow 0$.
(R2)~\emph{Weak-form FPE}: interpreting the one-point forward equation
in the distributional sense, possibly using kinetic formulations in
the spirit of \citet{debussche2010scalar}, would extend the theory to
discontinuous density fields.  This is a substantial mathematical
undertaking that we defer to future work.
\end{remark}

\section{Physics-Informed Score Matching}\label{sec:method}

This section presents the complete estimation framework.  The
architecture mirrors the two-component structure of the PIDL framework
\citep{shi2021physics}, comprising an estimation network coupled with a
physics evaluator, but operates at the distributional level rather
than the trajectory level.

\subsection{Problem statement}

Sparse observations are collected from traffic sensors:
\begin{itemize}
  \item \textbf{Loop detectors} at fixed locations
    $\{x_\ell\}_{\ell=1}^{N_L}$ providing density
    $\rho^{\text{obs}}(x_\ell,t_i)$ and/or speed
    $u^{\text{obs}}(x_\ell,t_i)$ at discrete times.
  \item \textbf{Probe vehicles} traversing the segment along
    trajectories $\{(x_j(\tau),\tau)\}$, providing speed
    measurements $u^{\text{obs}}(x_j(\tau),\tau)$.
\end{itemize}
The goal is to produce a data-conditioned distributional estimate
$p_\theta(\hat\rho;x,t)$ for all $(x,t)\in Q_T$ that is
(i)~anchored to the available observations $\mathcal{O}$ via denoising
score matching, and (ii)~regularised by the one-point Fokker--Planck
equation~\eqref{eq:FPE}, which encodes the stochastic traffic physics
as a structural prior.

\begin{remark}[Nature of the distributional target]
\label{rem:stat-target}
The distribution $p(\hat\rho;x,t)$ derived in \Cref{sec:FPE} is the
\emph{forward stochastic law}: the one-point marginal of the density
field under the Brownian forcing across realisations.  In practice,
the estimation problem typically involves a \emph{single realisation}
observed through sparse sensors, and the object of operational interest
is the posterior $p(\hat\rho;x,t\mid\mathcal{O})$---the distribution
of the latent traffic state conditioned on one set of observations.
These two objects are distinct.  The present framework uses the forward
law's Fokker--Planck equation as a \emph{physics-informed structural
prior} that regularises a score network trained on the available
observations via DSM.  The resulting estimate $p_\theta(\hat\rho;x,t)$
is therefore best interpreted as an instance-specific approximation:
a distributional fit for the observed scenario whose structure is
constrained by, but not identical to, the forward stochastic law.
This is analogous to how a deterministic PINN produces a solution
consistent with observations and the PDE, without requiring the PDE
to be the full posterior model.  A fully Bayesian formulation---in
which $p(\hat\rho;x,t\mid\mathcal{O})$ is derived as a posterior
through filtering or variational inference with the stochastic LWR
as the prior process---would provide a more rigorous statistical
foundation but is beyond the scope of this paper.
\end{remark}

\subsection{Score network}\label{sec:score-net}

\begin{definition}[Score network]\label{def:score-net}
The score network is a neural network
$s_\theta:(0,\rho_{\max})\times\mathcal{D}\times[0,T]\to\RR$
that approximates the score function:
\begin{equation}\label{eq:score-approx}
  s_\theta(\hat\rho;x,t) \;\approx\;
  \partial_{\hat\rho}\log p(\hat\rho;x,t).
\end{equation}
\end{definition}

The input is a triplet $(\hat\rho,x,t)$ where
$\hat\rho\in(0,\rho_{\max})$ is a query point in density space and
$(x,t)$ specifies the spatio-temporal location.  The output is a
scalar $s_\theta\in\RR$.  The architecture is a fully-connected
feedforward network with $L$ hidden layers of width $H$ and $\tanh$
activation.  Sinusoidal positional encoding is applied to $(x,t)$:
\begin{equation}\label{eq:pos-enc}
  \gamma(z) = \bigl(\sin(2^0\pi z),\cos(2^0\pi z),\dots,
  \sin(2^{M-1}\pi z),\cos(2^{M-1}\pi z)\bigr),
\end{equation}
yielding an effective input dimension of $1+2\times 2M$.

\begin{remark}[Scalar score]\label{rem:scalar-score}
Since the stochastic forcing in~\eqref{eq:SLWR} produces diffusion
only in the $\hat\rho$-direction, the Fokker--Planck equation is
parabolic in $\hat\rho$ alone and the score is a scalar function.
This substantially reduces the required network capacity compared to
multi-dimensional score networks used in image generation
\citep{song2021score}.
\end{remark}

\subsection{Physics residual evaluator}\label{sec:phys-eval}

The exact one-point theory in \Cref{sec:FPE,sec:PFODE} shows that the
physics residual cannot be built from the score network alone: one must
also specify a transportation closure for the unresolved advection term.
Accordingly, we introduce an auxiliary advection-closure module
$b_\phi(\hat\rho,x,t)$, or equivalently a structured closure of the
form $b_\phi(\hat\rho,x,t)=-f'(\hat\rho)m_\phi(\hat\rho,x,t)$.
Define the closed probability-flow velocity
\begin{equation}\label{eq:vPF-theta-phi}
  v_{\theta,\phi}(\hat\rho,x,t)
  := b_\phi(\hat\rho,x,t)
     - \frac{1}{2}\partial_{\hat\rho}\Sigma^2(\hat\rho,x)
     - \frac{1}{2}\Sigma^2(\hat\rho,x)\,s_\theta(\hat\rho;x,t).
\end{equation}
From the continuity equation
$\partial_t p + \partial_{\hat\rho}(v_{\theta,\phi}p)=0$, one has
\begin{equation}\label{eq:logp-evolution}
  \partial_t \log p = -\partial_{\hat\rho}v_{\theta,\phi}
  - v_{\theta,\phi}\,\partial_{\hat\rho}\log p.
\end{equation}
Differentiating with respect to $\hat\rho$ and substituting
$s_\theta\approx\partial_{\hat\rho}\log p$ yields the score-form
forward-equation residual
\begin{equation}\label{eq:score-FPE-residual}
  \mathcal{R}_{\theta,\phi}(\hat\rho,x,t)
  := \partial_t s_\theta
   + v_{\theta,\phi}\,\partial_{\hat\rho}s_\theta
   + (\partial_{\hat\rho}v_{\theta,\phi})\,s_\theta
   + \partial_{\hat\rho}^2 v_{\theta,\phi}.
\end{equation}
All terms are computable by automatic differentiation.  This is the
distributional analogue of the deterministic PDE residual in the PIDL
framework, but now framed as a transportation closure plus score
representation.

\subsection{Loss function}\label{sec:loss}

The joint training loss has three components:
\begin{equation}\label{eq:loss}
  \boxed{
  \mathcal{L}(\theta,\phi) = \mathcal{L}_{\text{SM}}(\theta)
  + \lambda\,\mathcal{L}_{\text{PF}}(\theta,\phi)
  + \lambda_{\text{BC}}\,\mathcal{L}_{\text{BC}}(\theta,\phi)
  }
\end{equation}

\subsubsection{Score matching loss $\mathcal{L}_{\text{SM}}$}

We employ denoising score matching \citep[DSM;][]{vincent2011connection}, adapted to
the traffic setting.  At sensor locations where $\rho^{\text{obs}}_i$
is observed, we perturb each observation with known noise and match
the score of the perturbed distribution.  For each observation at
$(x_i,t_i)$:
\begin{enumerate}
  \item Draw $\epsilon\sim\mathcal{N}(0,1)$ and form
    $\tilde\rho_i = \rho^{\text{obs}}_i + \sigma_{\text{DSM}}\epsilon$.
  \item The score of the Gaussian perturbation kernel
    $q_\sigma(\tilde\rho\mid\rho^{\text{obs}})
    = \mathcal{N}(\tilde\rho;\rho^{\text{obs}},\sigma_{\text{DSM}}^2)$
    is $-\epsilon/\sigma_{\text{DSM}}$.
\end{enumerate}
The DSM loss is:
\begin{equation}\label{eq:DSM}
  \mathcal{L}_{\text{SM}}(\theta)
  = \frac{1}{N_{\text{obs}}}\sum_{i=1}^{N_{\text{obs}}}
  \EE_{\epsilon\sim\mathcal{N}(0,1)}\Bigl[
    \bigl|s_\theta(\tilde\rho_i;x_i,t_i)
    + \epsilon/\sigma_{\text{DSM}}\bigr|^2\Bigr].
\end{equation}
Following \citet{song2019generative}, we use multiple noise scales
$\{\sigma^{(l)}_{\text{DSM}}\}_{l=1}^{N_\sigma}$ in geometric
progression to capture both global structure and local precision,
with annealing weights $w_l\propto(\sigma^{(l)})^2$.
Since the density domain $(0,\rho_{\max})$ is bounded, Gaussian
perturbations at larger noise scales can produce samples $\tilde\rho_i$
outside the physical range.  The DSM target
$-\epsilon/\sigma_{\text{DSM}}$ in~\eqref{eq:DSM} is exact only for
the unconstrained Gaussian perturbation kernel.  On the bounded domain,
clipping, reflection, or truncation changes the perturbation law and
therefore changes the target score.  In practical implementations,
a boundary-aware perturbation kernel with its correspondingly adjusted
score target should be used when boundary effects are non-negligible;
at the smallest noise scales in a multi-scale DSM schedule, boundary
effects are typically negligible provided the observations lie well
within $(0,\rho_{\max})$.

When only speed $u^{\text{obs}}$ is available (e.g., from probe
vehicles), we exploit the fundamental diagram $u=v(\rho)$ to convert
to density: $\rho^{\text{obs}}=v^{-1}(u^{\text{obs}})$.

\begin{remark}[Regularisation of the score from sparse observations]
\label{rem:DSM-identifiability}
In standard generative modelling, DSM learns a score from independent
samples drawn from the target distribution.  In the traffic setting,
sensor observations at a given $(x,t)$ typically come from a single
realisation of the traffic field, not from repeated independent draws
of $p(\hat\rho;x,t)$.  The inverse problem is nevertheless regularised
by two complementary information sources: (i)~the DSM loss anchors the
score at observed sensor locations across different $(x_i,t_i)$,
effectively pooling information across space and time, and (ii)~the
physics loss $\mathcal{L}_{\text{PF}}$ enforces the Fokker--Planck
equation at collocation points throughout the $(\hat\rho,x,t)$ domain,
propagating distributional constraints from observed to unobserved
regions via the conservation-law dynamics.  Multi-scale DSM with noise
annealing further regularises the score at scales not directly resolved
by the sparse observations.  We note that this regularisation argument
is methodological rather than a formal identifiability theorem; in
particular, the interplay between the learned score $s_\theta$ and the
learned closure $b_\phi$ introduces flexibility that may require
additional structural constraints in practice.

To mitigate this flexibility, the structured closure
$b_\phi(\hat\rho,x,t) = -f'(\hat\rho)\,m_\phi(\hat\rho,x,t)$
provides a significant reduction in degrees of freedom: the known
characteristic-speed factor $f'(\hat\rho)$ is imposed analytically,
and only the conditional-gradient surrogate $m_\phi$ is learned.  This
factorisation ensures that $b_\phi$ has the correct sign structure
(advecting in the characteristic direction) and that in the
zero-noise limit ($\sigma_k\to 0$), the closure is designed to recover
the deterministic LWR dynamics, since the intended target
$m_\phi\to\partial_x\rho$ reduces the equation to the classical
LWR form.  The
narrower mean-field variant
$b_\phi = -f'(\hat\rho)\,\partial_x\bar\rho_\phi(x,t)$ constrains
the closure further to depend on a single auxiliary mean-state field,
reducing the learnable component to a function of $(x,t)$ alone and
eliminating the $\hat\rho$-dependence of the gradient surrogate
entirely.  These progressively restrictive structures provide a
hierarchy of inductive biases that trade off expressiveness against
identifiability; the appropriate level can be selected based on data
availability and validated via the ablation studies.
\end{remark}

\subsubsection{Physics loss $\mathcal{L}_{\text{PF}}$}

A set of collocation points
$\mathcal{C}=\{(\hat\rho_j,x_j,t_j)\}_{j=1}^{N_C}$ is sampled from
$(0,\rho_{\max})\times\mathcal{D}\times[0,T]$ via Latin hypercube
sampling.  The physics loss is:
\begin{equation}\label{eq:phys-loss}
  \mathcal{L}_{\text{PF}}(\theta,\phi)
  = \frac{1}{N_C}\sum_{j=1}^{N_C}
  \bigl|\mathcal{R}_{\theta,\phi}(\hat\rho_j,x_j,t_j)\bigr|^2.
\end{equation}
A key structural difference from standard PINN implementations is that
collocation occurs in the three-dimensional domain
$(\hat\rho,x,t)$ rather than in $(x,t)$ alone, reflecting the
additional density-space dimension of the Fokker--Planck equation.

\subsubsection{Boundary condition loss $\mathcal{L}_{\text{BC}}$}

This term enforces the zero-flux boundary condition
$J(\hat\rho;x,t)=0$ at $\hat\rho\in\{0,\rho_{\max}\}$ together with the
initial condition $p(\hat\rho;x,0)=\delta_\epsilon(\hat\rho-\rho_0(x))$, where
$\delta_\epsilon$ is a mollified delta function.  The density
$p_\theta$ is reconstructed from the score via numerical quadrature:
$p_\theta(\hat\rho;x,t)\propto
\exp\bigl(\int_{\rho_*}^{\hat\rho}s_\theta(\rho';x,t)\,\dd\rho'\bigr)$,
where $\rho_*\in(0,\rho_{\max})$ is an arbitrary reference density. The
corresponding boundary-flux surrogate is
$J_{\theta,\phi}=b_\phi p_\theta-\tfrac12\partial_{\hat\rho}(\Sigma^2 p_\theta)$.

\subsection{Training algorithm}\label{sec:training}

\begin{algorithm}[ht]
\caption{Joint Score Matching + Physics-Informed Training}
\label{alg:training}
\begin{algorithmic}[1]
\Require Observations $\mathcal{O}$; model parameters
  $(f,\sigma_k,e_k,K)$; hyperparameters
  $(\lambda,\lambda_{\text{BC}},\sigma_{\text{DSM}},L,H,M)$;
  learning rate $\eta$; batch sizes $(B_{\text{obs}},B_C,B_B)$.
\Ensure Trained score network $s_{\theta^*}$ and closure module $b_{\phi^*}$.
\State Initialise $\theta$ and $\phi$ (Xavier initialisation).
\For{epoch $=1,2,\ldots,N_{\text{warm}}$}
  \State Sample $B_{\text{obs}}$ observations; draw
    $\epsilon\sim\mathcal{N}(0,I)$; compute $\mathcal{L}_{\text{SM}}$.
  \State Sample $B_C$ collocation points via Latin hypercube in
    $(0,\rho_{\max})\times\mathcal{D}\times[0,T]$.
  \State Evaluate $s_\theta$, $b_\phi$, and the required derivatives
    $\partial_{\hat\rho}s_\theta$, $\partial_{\hat\rho}^2 s_\theta$,
    $\partial_t s_\theta$,
    $\partial_{\hat\rho}b_\phi$, $\partial_{\hat\rho}^2 b_\phi$
    via automatic differentiation.
  \State Compute $\mathcal{R}_{\theta,\phi}$ at each collocation
    point; compute $\mathcal{L}_{\text{PF}}$.
  \State Sample $B_B$ boundary points; compute $\mathcal{L}_{\text{BC}}$.
  \State $\mathcal{L}\gets\mathcal{L}_{\text{SM}}
    +\lambda\mathcal{L}_{\text{PF}}
    +\lambda_{\text{BC}}\mathcal{L}_{\text{BC}}$.
  \State Update $(\theta,\phi)\gets(\theta,\phi)-\eta\nabla_{(\theta,\phi)}\mathcal{L}$
    (Adam optimiser with cosine learning rate decay).
\EndFor
\State Fine-tune via L-BFGS until $|\Delta\mathcal{L}|<10^{-6}$.
\end{algorithmic}
\end{algorithm}

The hyperparameter $\lambda$ is balanced adaptively using
neural tangent kernel (NTK) rescaling \citep{wang2022ntk}, ensuring
that the gradient magnitudes of $\mathcal{L}_{\text{SM}}$ and
$\lambda\mathcal{L}_{\text{PF}}$ remain of comparable order.
When the noise structure $(\sigma_k,e_k)$ is unknown, these
parameters are encoded as trainable variables in the physics evaluator
and optimised jointly with $(\theta,\phi)$.  In this case, the
parameterisation of $\sigma_k$ must be constrained to enforce the
endpoint vanishing conditions
$\sigma_k(0)=\sigma_k(\rho_{\max})=0$ and the interior positivity
required by Assumption~\ref{ass:noise}; for example, one may write
$\sigma_k(\rho) = \rho(\rho_{\max}-\rho)\,\tilde\sigma_k(\rho)$
with $\tilde\sigma_k$ unconstrained.

\subsection{Inference}\label{sec:inference}

Once $s_{\theta^*}$ (and, when used, $b_{\phi^*}$) is trained, the
distributional estimate at any query point $(x^*,t^*)$ is recovered via:
\begin{equation}\label{eq:log-density-recovery}
  \log p_{\theta^*}(\hat\rho;x^*,t^*)
  = \int_{\rho_*}^{\hat\rho}s_{\theta^*}(\rho';x^*,t^*)\,\dd\rho'
  + C(x^*,t^*),
\end{equation}
where $\rho_*\in(0,\rho_{\max})$ is an arbitrary reference density and
$C(x^*,t^*)$ is the log-normalisation constant determined by the unit
total mass requirement
$\int_0^{\rho_{\max}} p_{\theta^*}(\hat\rho;x^*,t^*)\,\dd\hat\rho = 1$,
evaluated by Gauss--Legendre quadrature with $N_Q=100$ nodes on
$[0,\rho_{\max}]$.
From $p_{\theta^*}$, the following summary statistics are computed:
\begin{align}
  \text{Conditional mean:}\;&\;
  \bar\rho(x^*,t^*) = \textstyle\int_0^{\rho_{\max}}\hat\rho\,
    p_{\theta^*}\,\dd\hat\rho, \label{eq:cond-mean} \\
  \text{Conditional std:}\;&\;
  \sigma_\rho = \bigl[\textstyle\int(\hat\rho-\bar\rho)^2
    p_{\theta^*}\,\dd\hat\rho\bigr]^{1/2}, \label{eq:cond-std} \\
  \text{95\% credible interval:}\;&\;
  [\rho_{0.025},\;\rho_{0.975}]. \label{eq:CI}
\end{align}
Quadrature-based density recovery is the primary inference route.
In principle, the closed PF-ODE~\eqref{eq:PFODE-closed} could also be
used for sample generation by evolving an ensemble of initial
conditions drawn from a mollified initial law; the design of such a
sampling procedure is left to the numerical implementation.

\subsection{Structural comparison with PIDL}\label{sec:PIDL-compare}

\Cref{fig:arch} illustrates the architectural correspondence between
the PIDL framework of \citet{shi2021physics} and the proposed method.

\begin{figure}[ht]
\centering
\begin{tikzpicture}[
  block/.style={draw, rounded corners, minimum height=0.9cm,
    minimum width=2.2cm, align=center, font=\small},
  arrow/.style={-{Stealth[length=2.5mm]}, thick},
]
\node[block,fill=blue!10] (punn) at (0,3.5)
  {PUNN\\$(x,t)\!\to\!(\hat\rho,\hat u)$};
\node[block,fill=red!10] (pinn) at (5.5,3.5)
  {PINN\\ARZ residual};
\node[block,fill=yellow!15] (out1) at (9.5,3.5)
  {Point\\estimate};
\draw[arrow] (punn)--(pinn); \draw[arrow] (pinn)--(out1);
\node[font=\footnotesize\itshape] at (2.75,4.3) {PIDL (Shi et al., 2021)};
\node[block,fill=blue!10] (score) at (0,1)
  {Score Net\\$(\hat\rho,x,t)\!\to\!s_\theta$};
\node[block,fill=red!10] (phys) at (5.5,1)
  {FPE Residual\\$\mathcal{R}_{\theta,\phi}$};
\node[block,fill=yellow!15] (out2) at (9.5,1)
  {$p_\theta(\hat\rho;x,t)$};
\draw[arrow] (score)--(phys); \draw[arrow] (phys)--(out2);
\node[font=\footnotesize\itshape] at (2.75,1.8) {This paper};
\draw[-{Stealth[length=3mm]},ultra thick,dashed,gray]
  (2.75,2.8)--(2.75,2.2);
\end{tikzpicture}
\caption{Architectural comparison between PIDL and the proposed
framework.  Both share a two-component structure (estimation network +
physics evaluator), but the proposed framework operates at the
distributional level: the network learns a score function, the physics
evaluator regularises via the Fokker--Planck equation, and the output is a
data-conditioned distributional estimate.}
\label{fig:arch}
\end{figure}
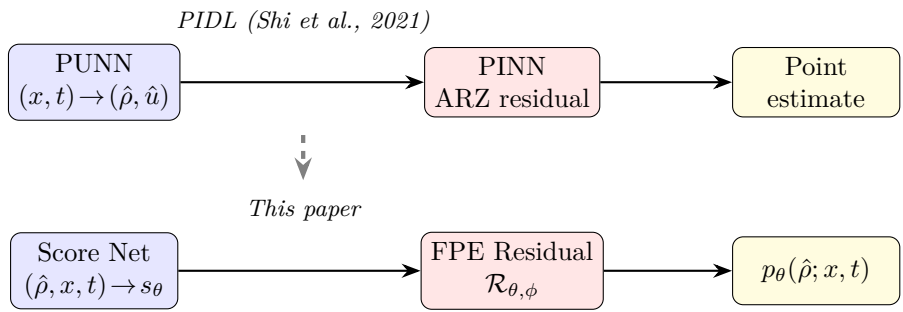

\Cref{tab:component-compare} provides a component-level comparison.

\begin{table}[ht]
\centering
\caption{Component-level comparison between PIDL and the proposed
framework.}
\label{tab:component-compare}
\renewcommand{\arraystretch}{1.25}
\small
\begin{tabular}{l|cc}
\toprule
\textbf{Component} & \textbf{PIDL} & \textbf{This paper} \\
\midrule
Network input & $(x,t)$ & $(\hat\rho,x,t)$ \\
Network output & $(\hat\rho,\hat u)$ point values &
  $s_\theta$ (score, scalar) \\
Physics evaluator & ARZ residual $(\hat f_1,\hat f_2)$ &
  FPE residual $\mathcal{R}_{\theta,\phi}$ \\
Data loss & $\sum|\hat\rho-\rho^{\text{obs}}|^2$ &
  DSM: $\sum|s_\theta+\epsilon/\sigma|^2$ \\
Physics loss & $\sum|\hat f_1|^2+|\hat f_2|^2$ &
  $\sum|\mathcal{R}_{\theta,\phi}|^2$ \\
Collocation domain & $(x,t)$, 2D & $(\hat\rho,x,t)$, 3D \\
Output & Single $\hat\rho(x,t)$ &
  Full $p_\theta(\hat\rho;x,t)$ \\
UQ & None (built-in) & Credible intervals from $p$ \\
\bottomrule
\end{tabular}
\end{table}

\section{Discussion and Future Work}\label{sec:discussion}

We derive the exact one-point Fokker--Planck equation for the Ito-type stochastic LWR model and propose a physics-informed score-based learning architecture for the resulting problem. This section discusses the practical significance of the framework, its computational characteristics, current limitations, possible extensions to stochastic fundamental diagrams at the link and network levels, and the planned empirical validation.

\subsection{Practical implications}

The distributional output enables risk-aware traffic
management.  The probability that density exceeds a critical threshold
$\rho_c$ (e.g., the onset of congestion) can be computed as
$\int_{\rho_c}^{\rho_{\max}}p_\theta(\hat\rho;x,t)\,\dd\hat\rho$.
This quantity directly informs variable speed limit decisions and
congestion pricing with quantified confidence levels, which is not
achievable with point-estimate methods.

\subsection{Computational considerations}

The proposed architecture adds computational overhead relative to
standard PIDL due to the three-dimensional collocation in
$(\hat\rho,x,t)$ and multi-scale DSM.  However, inference is
efficient: evaluating $p_\theta$ at a single $(x,t)$ requires $N_Q$
forward passes (${\sim}100$), which is trivially parallelisable.
Compared with a Monte Carlo forward solver, the proposed method solves
an \emph{inverse} estimation problem from sparse data, whereas Monte Carlo addresses the forward
simulation problem.  Concrete runtime comparisons will be reported
with the numerical experiments.

\subsection{Extension to the stochastic fundamental diagram}

An important next step is to extend the proposed framework to the
stochastic fundamental diagram at both the link and network levels.

\paragraph{Link-level stochastic fundamental diagram.}
At a single link or detector location $x$, the classical fundamental
diagram $q = f(\rho)$ is a deterministic curve.  Under the It\^{o}-type
SLWR, the density at location $x$ is a random variable with
distribution $p(\hat\rho;x,t)$ governed by the Fokker--Planck equation.
The corresponding flow distribution is obtained via the change of
variables formula.  Since standard concave fundamental diagrams are not
globally invertible on $[0,\rho_{\max}]$ (sub-capacity flows correspond
to two density branches), the flow density is computed as a sum over
preimages:
$p(\hat{q};x,t) = \sum_{j:\,f(\rho_j)=\hat{q}}
p(\rho_j;x,t)/|f'(\rho_j)|$,
or more generally through any functional
$\EE[g(f(\rho))\mid\rho(x,t)=\hat\rho]$.
The scatter observed in empirical flow-density measurements at a fixed
detector is then a direct manifestation of the stochastic forcing: the
width and shape of $p(\hat{q};x,t)$ at a given mean density are
determined by the noise parameters $\sigma_k$ and $e_k(x)$.  This
provides a \emph{link-level stochastic fundamental diagram} as a
naturally derived quantity.  The joint distribution of $(\rho, q)$ at a fixed location,
or equivalently the marginal flow distribution $p(\hat{q};x,t)$
obtained by pushforward, characterises day-to-day flow variability
and is a quantity of direct operational value for reliability-based
traffic management.  (Note that under a deterministic flux relation
$q=f(\rho)$, conditioning on exact $\hat\rho$ makes $q$ deterministic;
the meaningful stochastic objects are the marginal and joint
distributions, or bin-conditioned empirical relations.)

\paragraph{Network-level stochastic macroscopic fundamental diagram.}
While the concept of a macroscopic flow--density relationship has long
been known in traffic theory, the network-level MFD, as characterised
empirically by \citet{geroliminis2008existence} and others, exhibits a
well-defined but \emph{scattered} relationship between the
network-average flow $\bar{q}$ and network-average density $\bar\rho$.
This scatter has been attributed primarily to the spatial heterogeneity
of the density distribution within the region
\citep{mazloumian2010spatial, leclercq2014macroscopic,
ambuehl2023understanding}.  The noise structure
$\sum_k \sigma_k(\rho)\,e_k(x)\,\dd W^k_t$ generates precisely the
kind of spatially heterogeneous density perturbations that have been
identified as the source of network MFD scatter.

A natural concern is whether the shock-regime limitation of the
pointwise FPE (\Cref{rem:shock}) undermines the network-level
extension.  While spatial averaging improves regularity (the MFD
variables
$\bar\rho(t) = \frac{1}{|\mathcal{A}|}\int_\mathcal{A}\rho(x,t)\,\dd x$
and
$\bar{q}(t) = \frac{1}{|\mathcal{A}|}\int_\mathcal{A}f(\rho(x,t))\,\dd x$
are continuous in time even when the density field contains shocks),
deriving a closed, low-dimensional stochastic evolution for
$(\bar\rho,\bar{q})$ raises its own closure challenges that are
distinct from the pointwise case.  We therefore present the following
research avenues as conjectural extension paths rather than
consequences secured by the present theory:
(i)~derivation of a network-level forward equation for
$p(\bar{q},\bar\rho;t)$ via spatial aggregation of the link-level
stochastic dynamics;
(ii)~estimation of the distribution
$p(\bar{q}\mid\bar\rho;t)$ via the probability flow ODE
framework; and (iii)~physics-informed score matching at the network
level with an aggregated forward equation as the physics constraint.

These extensions would connect the stochastic traffic flow modelling
tradition to the fundamental diagram literature at both scales: the
link-level stochastic FD follows from the present framework, while
the network-level stochastic MFD remains an open problem for which
the current work provides a plausible mathematical starting point.

\subsection{Limitations}

Several limitations should be acknowledged.  First, the well-posedness
results (\Cref{prop:wellposed}) require the smooth-solution regime
(Assumption~\ref{ass:smooth}); extending the theory to entropy
solutions in the presence of shocks remains an open mathematical
problem.  Second, the one-point marginal FPE does not capture spatial
correlations in the density field; extension to multi-point marginals
or richer spatial closure models would provide more complete
distributional information.  Third, the current framework addresses
the first-order LWR model; extension to second-order models (e.g.,
ARZ) would enable joint density--velocity distributional estimation.
Fourth, online adaptive retraining as new sensor data arrive would
further enhance practical applicability.

\paragraph{Future empirical validation.}
Empirical validation is deferred to a subsequent revision.  We plan
to validate the framework first on synthetic data generated from Monte
Carlo realisations of the stochastic LWR model, where the inferred
one-point density distributions can be compared directly with the
reference stochastic solution.  We will then compare the method on
real traffic data against deterministic PIDL baselines to evaluate the
practical value of distributional, rather than purely pointwise,
traffic state estimation.  Performance will be assessed in terms of
point accuracy, calibration, and distributional quality.

\section{Conclusion}\label{sec:conclusion}

This paper develops a theoretical foundation for a
physics-informed deep learning framework for distributional traffic
state estimation under stochastic LWR dynamics.  Its main contributions include:

First, it formulates an It\^{o}-type stochastic LWR model with finite-dimensional
Brownian noise and density-dependent diffusion,
extending random-parameter stochastic LWR models from static
parametric uncertainty to dynamic stochastic forcing over the estimation horizon.

Second, it derives the Fokker--Planck equation for the one-point
marginal density as an exact forward equation with an explicit
conditional drift term. It also derives the corresponding probability
flow ODE, yielding a deterministic transport representation that can
serve as a distributional physics constraint in a neural network once
a closure is specified.

Third, it proposes a distributional estimation framework: a score
network paired with an auxiliary advection-closure module, trained via
denoising score matching and a Fokker--Planck residual loss.  The
architecture targets the one-point distributional estimate of traffic
density, from which point estimates, credible intervals, and
congestion-risk metrics can be derived.

At the link level, the one-point distribution $p(\hat\rho;x,t)$
yields a stochastic fundamental diagram in which the scatter in
empirical flow-density measurements is a direct consequence of the
Brownian forcing.  At the network level, the framework suggests a
conjectural pathway toward a stochastic macroscopic fundamental
diagram, though deriving a closed network-level forward equation
raises additional closure challenges that remain open.  More broadly,
the idea of embedding stochastic conservation law physics at the
distributional level into deep learning architectures may find
application beyond traffic, in any domain where hyperbolic PDEs
govern uncertain dynamics observed through sparse measurements.
The main point is not simply that score-based learning is applied to
traffic estimation, but that the generative structure is derived from
stochastic traffic-flow physics, with closure and learning components
introduced only to obtain a trainable method. More broadly, the paper
shows how stochastic traffic-flow theory can be cast as a trainable
distributional estimation framework, providing a physics-grounded
alternative to purely pointwise traffic-state estimation and to
generic diffusion models that ignore conservation-law structure.

\bibliographystyle{elsarticle-harv}

\end{document}